\newcolumntype{C}{>{\centering\arraybackslash}X}
\newcounter{subeq}
\newcommand{\be}{\begin{equation}}
	\newcommand{\ee}{\end{equation}}
\newcommand{\bea}{\begin{eqnarray}}
	\newcommand{\eea}{\end{eqnarray}}
\newcommand{\bes}{\begin{equation*}}
	\newcommand{\ees}{\end{equation*}}
\newcommand{\beas}{\begin{eqnarray*}}
	\newcommand{\eeas}{\end{eqnarray*}}
\newtheorem{thm}{Theorem}
\newtheorem*{thm*}{Theorem}
\newtheorem{lem}[thm]{Lemma}
\newtheorem*{lem*}{Lemma}
\newtheorem{prop}{Proposition}
\newtheorem*{prop*}{Proposition}
\newtheorem*{lipschitzLem*}{Lemma \ref{lipschitz}}
\newtheorem*{lipschitzCubeLem*}{Lemma \ref{lipschitzCube}}
\newtheorem*{pgmNearlyOptimalThm*}{Theorem \ref{pgmNearlyOptimal}}
\begin{document}

	\title{Self-testing of Nonmaximal Genuine Entangled States using Tripartite Hardy Relations}

	\author{Ranendu Adhikary}
	\email{ronjumath@gmail.com}
	\affiliation{Cryptology and Security Research Unit, Indian Statistical Institute, 203 B.T. Road, Kolkata 700108, India.}
	
	\author{Souradeep Sasmal}
	\email{souradeep.007@gmail.com}
	\affiliation{Institute of Fundamental and Frontier Sciences,
		University of Electronic Science and Technology of China, Chengdu 611731, China}
	
	\author{Arup Roy}
	\email{arup145.roy@gmail.com}
	\affiliation{Department of Physics, Hooghly Mohsin College, West Bengal 712 101, India}

	
	\begin{abstract}
		We demonstrate that, in the tripartite scenario with all parties' local events being space-like separated, Hardy-type nonlocality constitutes a stronger manifestation of nonlocal correlations than those captured by Mermin-type inequalities, an important distinction that has hitherto remained unrecognised. To substantiate this assertion, we develop a general framework for the characterisation of tripartite correlations by extending the notion of Settings Independence and Outcome Independence beyond their bipartite formulation. This framework highlights the pivotal role of Hardy-type reasoning in the detection and certification of genuine multipartite nonlocality. Furthermore, we show that the tripartite Hardy-nonlocality enables the self-testing of a broad class of pure nonmaximally genuine entangled tripartite states. A key advantage of Hardy-based self-testing over methods based on tripartite Bell inequalities is its ability to certify quantum correlations even in the presence of nonmaximal violations. This, in turn, facilitates the device-independent certification of randomness from Hardy-type correlations. Unlike Bell functionals, which typically enable self-testing of only a single extremal point per inequality, Hardy relation self-tests a set of extremal quantum correlations for any nonzero Hardy probability. We find that the maximum certifiable randomness using Hardy-type correlations is $\log_2 7\approx 2.8073$-bits, highlighting both the practical  and foundational significance of Hardy-based techniques for quantum randomness generation.
	\end{abstract}
	
	\maketitle
	

	\section{Introduction} \label{intro}
	
	In practical quantum experiments, a key challenge lies in verifying whether the observed statistics genuinely originate from the intended quantum state and measurements, particularly when the devices are untrusted. Self-testing, based on the Bell nonlocality, addresses this by enabling the inference of both quantum states and measurements directly from observed correlations (i.e., joint probability distributions), without any assumptions regarding the internal workings of the devices \cite{Supic2020}. As such, self-testing underpins many applications in device-independent (DI) quantum information processing, including cryptography, randomness generation, delegated quantum computation, and foundational studies of quantum theory.
	
	Bell nonlocality \cite{Bell1964} is typically witnessed through the violation of suitable Bell inequalities, which are linear functionals defined over the set of joint probability distributions \cite{Brunner2014}. Within the quantum set, the optimal violations corresponds to unique extremal correlations, realised by specific quantum strategies (up to local unitaries and complex conjugation). Hence, optimal Bell violations form the basis for self-testing \cite{Supic2020}. However, many extremal quantum correlations are non-exposed; they do not maximise any Bell functional and thus cannot be self-tested through Bell inequality violations. These non-exposed extremal points often originate from nonmaximally entangled states, exposing a fundamental limitation of linear Bell inequalities.
	
	To overcome this, alternative approaches characterising the full quantum behaviour have been proposed \cite{Barizien2024}. Of particular interest are Hardy-type nonlocal arguments \cite{Hardy1992, Cabello2002, Liang2005}. Hardy's argument demonstrates that if certain events are constrained to have zero probability within a local ontic (realist) model, then another event, the Hardy probability, must also necessarily vanish \cite{Hardy1992}. Yet, quantum theory predicts a strictly nonzero value for specific states \cite{Hardy1992}, resulting in a direct contradiction with the predictions of any local realist model. It has been shown that Hardy nonlocality enables self-testing \cite{Rabelo2012, Rai2021, Rai2022} of non-exposed extremal correlations on the nosignalling boundary \cite{Goh2018, Rai2019, Chen2023}, even without maximal violation, thereby certifying a broader class of quantum correlations than those accessible via Bell inequalities.
	
	Despite this promise, Hardy-based self-testing has so far been largely confined to bipartite systems. In particular, the self-testing of tripartite systems remains significantly less explored. Early developments in multipartite self-testing concentrated on graph states \cite{Baccari2020}, the maximally entangled GHZ state \cite{Sarkar2022, Panwar2023, Singh2024}, and subsequently on partially entangled GHZ states via stabiliser-based approaches \cite{Supic2018}. The $W$ state \cite{Dur2000}, another prominent multipartite entangled state, has been self-tested using the SWAP method \cite{Wu2014}. Nevertheless, progress in this area has been constrained by the intrinsic complexity of multipartite systems \cite{Gallego2012, Bancal2013} and the rich variety of pure tripartite entangled states, each characterised by distinct entanglement structures. Further challenges arise from issues such as bi-separability and the reliable identification of genuine multipartite entanglement \cite{Acin2000}, complicating the self-testing landscape even further.
	
	A major conceptual challenge in multipartite scenarios is that the definition of locality itself becomes significantly more nuanced \cite{Svetlichny1987, Gallego2012, Bancal2013, Dutta2020}. For instance, in a setting with three spatially separated parties (Alice, Bob, and Charlie), generating deterministic outputs from local inputs can exhibit fully-local correlations \cite{Mermin1990}. If no causal constraints are imposed between two of the parties (e.g., Bob and Charlie), the correlations reduce to bilocal forms \cite{Svetlichny1987, Bancal2013}. Any correlation that cannot be decomposed into such a bilocal form is said to exhibit genuine multipartite nonlocality.
	
	In this work, we propose a framework for characterising multipartite nonlocality based on the notions of Settings Independence (SI) and Outcome Independence (OI), generalising the seminal ideas of Jarrett and Shimony from the bipartite case \cite{Jarrett1984, Shimony1993}. This framework not only provides a clearer conceptual foundation for understanding nonlocality in tripartite systems, but also captures essential causal independence constraints across multiple parties. Specifically, we show that this approach brings out a stronger form of tripartite nonlocality exhibited by the tripartite Hardy relation proposed in \cite{Rahaman2014} than was previously identified.
	
	We begin by showing that fully local correlations satisfy mutual SI and OI between all parties, implying a vanishing Hardy probability. Remarkably, we find that even if two-way OI is relaxed between any pair (e.g., Bob and Charlie), the Hardy probability remains zero. This leads us to conclude that that a nonzero Hardy probability necessarily implies the failure of both two-way SI across all parties and two-way OI between at least one pair. Moreover, we show that such correlations correspond precisely to nosignalling genuine nonlocal correlations, as characterised in \cite{Bancal2013}, thereby enabling the certification of genuine multipartite nonlocality in scenarios where the local events of Alice, Bob, and Charlie are space-like separated.
	
	Building upon this, we leverage Hardy-type genuine nonlocality to self-test a broad class of pure, genuinely entangled tripartite quantum states, including scenarios where the Hardy relation is nonmaximally violated. Furthermore, we demonstrate that such self-testing facilitates the certification of genuine quantum randomness from observed Hardy statistics, highlighting both the foundational significance and the practical utility of our results.
	
	The paper is organised as follows. In Sec.~\ref{hsioi}, we introduce the notion of OI and SI and establish that tripartite Hardy nonlocal correlations are genuine nonlocal. Sec.~\ref{hardybehav} characterises all pure three-qubit states compatible with the Hardy-type nonlocality conditions. In Sec.~\ref{stth}, we show how to self-test quantum states that exhibit nonmaximal Hardy violations. Sec.~\ref{randomness} derives the maximum certifiable randomness arising from Hardy-type correlations. We conclude with a discussion of the implications and future directions in Sec.~\ref{conclu}.


	\section{Tripartite Hardy-type nonlocality}\label{hsioi}

	Let us consider three spatially separated observers: Alice, Bob, and Charlie. Each party independently performs randomly selected one of two possible local measurements, denoted as $A_x$, $B_y$ and $C_z$, respectively, where $x,y,z\in\{0,1\}$. Each measurement yields outcomes $a,b,c\in\{+1,-1\}$. The statistics obtained from such an experimental scenario are characterised by a vector, referred to as a behaviour $\vec{P} \in \mathbb{R}^{64}$, defined by $\vec{P}=\{p(a,b,c|x,y,z)\}$, where $p(a,b,c|x,y,z)$ represents the joint probability of outcomes. In quantum theory, this joint  probability is given by
	\begin{equation}
		p(a,b,c|x,y,z,\rho) = \Tr[\rho \ A_{a|x} \otimes B_{b|y} \otimes C_{c|z}]
	\end{equation}
	where $A_{a|x}$, $B_{b|y}$ and $C_{c|z}$ associated with outcomes $a$, $b$ and $c$, respectively. $\rho \in \mathscr{L}(\mathcal{H}_A\otimes \mathcal{H}_B \otimes \mathcal{H}_C)$ is the shared tripartite quantum state with $\mathcal{H}$ denotes the associated Hilbert space for each party's system.
	
	In the ontological model framework of quantum theory, preparing a quantum state $\rho$ corresponds to assigning the system an ontic state $\lambda \in \Lambda$, where $\Lambda$ denotes the ontic state space. If, in each experimental run, the outcomes for all parties are specified by $\lambda$, independently of others' measurement choices and outcomes, such a model is termed \textit{local ontic} model \cite{Bell1964}. Equivalently, local ontological models satisfy the assumptions of Settings Independence (SI) and Outcome Independence (OI) as shown in \cite{Jarrett1984, Shimony1993}.
	
	Unlike the bipartite case, the tripartite scenario involves much richer correlation structures, giving rise to several ontological models, including the fully-local (FL) model \cite{Mermin1990}, the Svetlichny bilocal (SvBL) model \cite{Svetlichny1987}, and the nosignalling bilocal (NSBL) model \cite{Bancal2013}. We present a general framework to classify such models based on the SI and OI assumptions.
	
	The assumption of two-way settings independence among all parties, denoted by $SI(3,3)$, is defined as follows
	\begin{equation} \label{si33}
		\begin{aligned}
			p(a|x,\lambda)&=p(a|x,y,z,\lambda);  \ \forall a,x,y,z,\lambda \\ 
			p(b|y,\lambda)&=p(b|x,y,z,\lambda); \ \forall b,x,y,z,\lambda \\  
			p(c|z,\lambda)&=p(c|x,y,z,\lambda); \ \forall c,x,y,z,\lambda
		\end{aligned}
	\end{equation}
	For example, $SI(2,2)$ denotes that any two parties (say Alice-Bob and Alice-Charlie) satisfy both way SI condition but not the other pair (Bob-Charlie).
	
	Similarly, two-way outcome independence, denoted by $OI(3,3)$, is defined as
	\begin{equation}\label{oi33}
		\begin{aligned}
			p(a|x,\lambda)&=p(a|x,b,c,\lambda);  \ \forall a,b,c,x,\lambda \\ 
			p(b|y,\lambda)&=p(b|y,a,c,\lambda); \ \forall a,b,c,y,\lambda \\  
			p(c|z,\lambda)&=p(c|z,a,b,\lambda); \ \forall a,b,c,z,\lambda
		\end{aligned}
	\end{equation}
	We show that under the assumptions $SI(3,3)\wedge OI(3,3)$, the joint probability distribution $p(a,b,c|x,y,z,\lambda)$ becomes fully factorisable, i.e., $p(a,b,c|x,y,z,\lambda)=p(a|x,\lambda)p(b|y\lambda)p(c|z\lambda) $ (see Appx.~\ref{apnA}). In the literature, such ontological models are referred to as FL-ontic models \cite{Mermin1990}.
	
	However, even if a joint probability distribution does not admit such a fully local decomposition, it may still be consistent with a weaker form of locality. For instance, under the assumptions $SI(3,3)\wedge OI(2,2)$, the distribution can be decomposed as $p(a,b,c|x,y,z,\lambda)=p(a|x,\lambda)p(b,c|y,z,\lambda)$ (see Appx.~\ref{apnA}). Ontological models satisfying this decomposition are referred to as NSBL-ontic models \cite{Gallego2012}. 
	
	Thus, if Alice, Bob, and Charlie's local events are space-like separated, i.e., they lie in mutually non-overlapping light-cones, then genuine tripartite nonlocality arises only when the correlations cannot be explained by a $SI(3,3)\wedge OI(2,2)$-ontic model.
	
	For a $SI(3,3)\wedge OI(2,2)$-ontic model to reproduce quantum mechanical statistics, the following condition must be satisfied
	\begin{eqnarray} \label{gnol}
		p(a,b,c|x,y,z,\rho)&=\sum\limits_{\lambda\in\Lambda} \Big[\mu_1(\lambda) \ p(a|x,\lambda)p(b,c|y,z,\lambda) \nonumber \\
		&+ \mu_2(\lambda) \ p(b|y,\lambda)p(a,c|x,z,\lambda)  \\
		&+\mu_3(\lambda) \ p(c|z,\lambda)p(a,b|x,y,\lambda) \Big] \nonumber
	\end{eqnarray}
	where $\mu_m(\lambda)$ are the probability distributions in the ontic state space, satisfying $0\leq\mu_m(\lambda)\leq 1 \ \ \forall m\in\{1,2,3\}$ and $\sum\limits_{\lambda\in\Lambda}\sum\limits_{m}\mu_m(\lambda) =1$.  Any quantum correlation that fails to admit the decomposition in Eq.~(\ref{gnol}) is said to exhibit nosignalling genuine nonlocality (NSGNL) \cite{Gallego2012, Bancal2013}. One may also consider a more relaxed model, namely, the $SI(2,2)\wedge OI(2,2)$-ontic model (SvBL ontic model), in which one pair of parties (e.g., Bob–Charlie) violates both the SI and OI. This leads to the possibility of signalling between them, which would imply that they lie within the same light-cone, thus not fulfilling the spatial separation assumption considered in our scenario.
	
	Beyond facet inequalities characterising these models, a Hardy-type argument has been proposed for the tripartite scenario \cite{Rahaman2014}, akin to Hardy’s original bipartite argument \cite{Hardy1992}. It is defined by the following constraints
	\begin{eqnarray}\label{Hardy3}
		&p_H= p(+1,+1,+1|0,0,0,\lambda) \label{eq:H3} \\
		&\begin{aligned} \label{tri}
			& p_{AB}(+1,+1 | 1,0,\lambda) = 0, \\
			&p_{BC}(+1,+1 | 1,0,\lambda) = 0, \\
			& p_{AC}(+1,+1 | 0,1,\lambda) = 0. \\
			& p(-1,-1,-1 |1,1,1,\lambda) = 0. \\
		\end{aligned}  
	\end{eqnarray} 
	While it has been shown that $p_H=0$ for all FL ontic models \cite{Rahaman2014}, we demonstrate that $p_H=0$ holds under the weaker condition $SI(3,3)\wedge OI(2,2)$ (see Appx.~\ref{apnA}). Thus, our result establishes that any nonzero values of $p_H$ certifies the failure of $SI(3,3)\wedge OI(2,2)$-ontic model and confirms genuine multipartite nonlocality. This means that Hardy violating $(p_H>0)$ correlations cannot be explained by any $SI(3,3)$-ontic model, even one permitting strong correlations (such as PR-box-type) between two of the parties, underscoring the fundamental strength of Hardy's nonlocality test in tripartite systems.
	
	Recent work also reveals that, even when all zero-probability constraints of Eqs.~(\ref{tri}) are met, quantum theory allows for $p_H>0$ for some tripartite entangled states and suitable measurements for Alice, Bob and Charlie \cite{Rahaman2014}. Such quantum correlations are called as genuine Hardy-nonlocal correlations.
	
	It has been shown \cite{Adhikary2024} that the quantum maximum value of $p_H$ is $p^{opt}_H=0.0181$, achieved using the following quantum state and measurements
	\begin{equation}
		\begin{aligned}
			\ket{\psi} &= c_0\ket{000}+c_1\Big[\ket{001}+\ket{010}+\ket{100}\Big]  \\
			& +c_2\Big[\ket{011}+\ket{101}+\ket{110}\Big]+c_3\ket{111},  \\
			A_0&=B_0=C_0=\sigma_z, \\
			A_1&=B_1=C_1=\qty(\alpha \beta^{*}+\alpha^{*} \beta)\sigma_x-\iota \qty(\alpha \beta^{*}-\alpha^{*} \beta) \sigma_y \\
			&+\qty(\lvert \alpha \rvert^2- \lvert \beta \rvert^2) \sigma_z 
		\end{aligned}
	\end{equation}
	where $c_0=\frac{|\alpha|^3|\beta|^3}{\sqrt{1-|\alpha|^6}}$, $c_1=\frac{-\beta |\alpha|^4|\beta|}{\sqrt{1-|\alpha|^6}}$, $c_2=\frac{\beta^2|\alpha|^5}{|\beta|\sqrt{1-|\alpha|^6}}$, and $c_3=\frac{\beta^3\sqrt{1-|\alpha|^6}}{|\beta|^3}$. $\alpha$ and $\beta$ are complex numbers satisfying $|\alpha|^2 =1-|\beta|^2=\frac{(17+3\sqrt{33})^{2/3}-(17+3\sqrt{33})^{1/3}-2}{3(17+3\sqrt{33})^{1/3}}$. 
	

	\section{Characterisation of the quantum behaviour showing Hardy's nonlocality} \label{hardybehav}
	
	Here we analyse the set of quantum behaviour exhibiting Hardy's nonlocality, represented (omitting the tensor product for simplicity) by
	\begin{equation}
		\vec{P} \equiv\qty{p(a,b,c|x,y,z,\rho)=\Tr[\rho A_{a|x}B_{b|y}C_{c|z}]} \in \mathcal{Q}
	\end{equation}
	where $ \mathcal{Q}$ denotes the set of all quantum correlations. By applying Naimark’s dilation theorem \cite{Paulsen2003}, the measurement operators $A_{a|x}$, $B_{b|y}$ and $C_{c|z}$ can, without loss of generality, be taken as projectors. Consequently, we denote these operators as $\Pi_{\cdot|\cdot}$ with $(\Pi_{\cdot|\cdot})^2=\Pi_{\cdot|\cdot}$. We begin with a general pure three-qubit state shared among Alice, Bob, and Charlie
	\begin{equation}\label{generalform}
		\begin{aligned}
			\ket{\psi}& = a_1 \ket{u_0v_0w_0}+a_2\ket{ u_1v_0w_0}+a_3\ket{u_0v_1w_0} \\
			&+ a_4\ket{u_0v_0w_1}+a_5\ket{u_1v_1w_0}+a_6\ket{u_0v_1w_1} \\
			&+ a_7\ket{u_1v_0w_1}+a_8\ket{ u_1v_1w_1}
		\end{aligned}
	\end{equation}
	where $a_i$ are complex coefficients satisfying $\sum_{i\in\{1,8\}}|a_{i}|^2=1$. The orthonormal basis vectors $\ket{u_x} \in \mathcal{H}_A^2$, $\ket{v_y}\in \mathcal{H}_B^2$ and $\ket{w_z}\in \mathcal{H}_C^2$ are expressed as follows
	\begin{equation}
		\begin{aligned}
			&\vert u_0\rangle=Co_\alpha \ket{0} + e^{\iota \phi}S_\alpha\ket{1},~\vert u_1\rangle=-S_\alpha\ket{0} + e^{\iota \phi}Co_\alpha\ket{1},\nonumber\\
			&\vert v_0\rangle=Co_\beta\ket{0} + e^{\iota \xi}S_\beta\ket{1},~\vert v_1\rangle=-S_\beta\ket{0} + e^{\iota \xi}Co_\beta\ket{1}, \nonumber\\
			&\vert w_0\rangle=Co_\gamma\ket{0} + e^{\iota \eta}S_\gamma\ket{1},~\vert w_1\rangle=-S_\gamma\ket{0} + e^{\iota \eta}Co_\gamma\ket{1}\nonumber
		\end{aligned}
	\end{equation}
	Here, $Co_k:=\cos{\frac{k}{2}},~S_k:=\sin{\frac{k}{2}}$, $\alpha,\beta, \gamma\in[0, \pi)$ and $\phi, \xi, \eta \in [0, 2\pi)$. The measurement observables for Alice, Bob, and Charlie are defined as
	\begin{eqnarray}
		A_0&=&\ket{0}\bra{0} -\ket{1}\bra{1},~~A_1=\ket{u_0}\bra{u_0}-\ket{u_1}\bra{u_1}, \nonumber \\
		B_0&=&\ket{0}\bra{0} -\ket{1}\bra{1},~~B_1=\ket{v_0}\bra{v_0}-\ket{v_1}\bra{v_1}, \label{eq2} \\
		C_0&=&\ket{0}\bra{0} -\ket{1}\bra{1},~~C_1=\ket{w_0}\bra{w_0}-\ket{w_1}\bra{w_1}, \nonumber
	\end{eqnarray}
	It is important to note here that without loss of generality, we fix $A_0=B_0=C_0=\sigma_z$ and keep $A_1$, $B_1$ and $C_1$ as well as the state $\ket{\psi}$ in their most general forms \cite{Rai2022}.
	
	Now, in order to satisfy the equality constraints on joint probabilities in the ontic space, given by Eq.~(\ref{tri}), and quantum theory as well, the state $\vert \psi\rangle$ must be orthogonal to $\ket{u_0 0 0}$, $\ket{u_0 0 1}$, $\ket{0 v_0 0}$, $\ket{1 v_0 0}$, $\ket{0 0 w_0}$, $\ket{0 1 w_0}$, $\ket{u_1 v_1 w_1}$. These seven orthogonality conditions allow the $15$ independent state parameters in $\ket{\psi}$ given by Eq.~(\ref{generalform}) to be expressed in terms of three measurement parameters $\alpha$, $\beta$ and $\gamma$ and their associated phases $\phi$, $\xi$ and $\eta$. The resulting class of pure three-qubit states, denoted by $\ket{\psi}_H$, satisfying tripartite Hardy conditions is given by
	\begin{eqnarray} \label{hardystate}
		\ket{\psi}_{H} &=&\frac{1}{\sqrt{N}} \Big[\ket{u_0v_0w_0} + \mathcal{C}_{\alpha}  \ket{u_1v_0w_0} + \mathcal{C}_{\beta}  \ket{u_0v_1w_0} \nonumber \\
		&+&\mathcal{C}_{\gamma}  \ket{u_0v_0w_1}+ \mathcal{C}_{\alpha} \mathcal{C}_{\beta} \ket{u_1v_1w_0} \\
		&+& \mathcal{C}_{\beta} \mathcal{C}_{\gamma} \ket{u_0v_1w_1}+\mathcal{C}_{\alpha} \mathcal{C}_{\gamma} \ket{u_1v_0w_1} \Big] \nonumber
	\end{eqnarray}
	where $N=1+\mathcal{C}^2_\alpha+\mathcal{C}^2_\beta+\mathcal{C}^2_\gamma+\mathcal{C}^2_\alpha \mathcal{C}^2_\beta+\mathcal{C}^2_\beta \mathcal{C}^2_\gamma+\mathcal{C}^2_\alpha \mathcal{C}^2_\gamma$ and $\mathcal{C}_k:=\cot\frac{k}{2}$ with $k \in \{\alpha, \beta, \gamma\}$. The quantum behaviour $\vec{P}_H \equiv \{p(a,b,c|x,y,z,\rho_H)\}$, represented by 64 joint probabilities derived from the state $\ket{\psi}_H$ in Eq.~(\ref{hardystate}) and measurements, given by Eq.~\eqref{eq2} can be expressed in an array as follows
	\begin{widetext}
		{\scriptsize \begin{align}
				\begingroup
				\setlength{\tabcolsep}{20pt} 
				\renewcommand{\arraystretch}{2}
				\begin{array}{c||c|c|c|c|c|c|c|c|} 
					&  ( +,+,+ )  &  ( -,+,+ ) & ( +,-,+ ) & ( +,+,- ) &( -,-,+ )  &  ( -,+,- ) & ( +,-,- ) & ( -,-,- ) \\ \hline\hline
					A_0B_0C_0   & \dfrac{( 1 - r ) r^2 s t h }{(1 - r s)(1 - r t)}  & \dfrac{( 1 - r )r t h}{1 - r t}  & \dfrac{( 1 - r )r s h}{1 - r s} & \dfrac{( 1 - r )^2 r s t}{(1 - r s)(1 - r t)} & ( 1 - r )h  & \dfrac{( 1 - r )^2 t}{ 1 - r t } & \dfrac{( 1 - r )^2 s}{ 1 - r s } &  r  \\\hline
					A_1B_0C_0   &  0  &  \dfrac{( 1 - r ) r t h }{(1 - r s)(1 - r t)}  &  0  &  0 &  \dfrac{( 1 - r ) h}{1 - r s}  &  \dfrac{( 1 - r )^2 t}{(1 - r s)(1 - r t)}  &  s  &  \dfrac{r ( 1 - s )^2}{ 1 - r s }   \\\hline
					A_0B_1C_0   &  0  &  0  & \dfrac{( 1 - r ) r s h }{(1 - r s)(1 - r t)}  & 0 &  \dfrac{( 1 - r ) h}{1 - r t}  &  t  & \dfrac{( 1 - r )^2 s}{(1 - r s)(1 - r t)}  & \dfrac{r ( 1 - t )^2}{ 1 - r t }  \\\hline
					A_1B_1C_0   &  0  &  0  & 0 &  r s t &  \dfrac{( 1 - r ) h}{(1 - r s)(1 - r t)}  &  (1 - r s) t  & s (1 - r t) &  \dfrac{rh^2}{(1 - r s)(1 - r t)}  \\\hline
					A_0B_0C_1   & 0  & 0  & 0 & (1-r) r s t  & \dfrac{h}{(1 - r s)(1 - r t)}  & (1 - r) (1 - r s) t   & (1 - r) s (1 - r t) & \dfrac{r(1-r)(1-h)^2}{(1 - r s)(1 - r t)}   \\\hline
					A_1B_0C_1   &  0  &  0  &  \dfrac{r s h }{(1 - r s)(1 - r t)}  &  0 &  \dfrac{h}{1 - r t}  &  (1 - r) t  &  \dfrac{(1-r) s}{(1 - r s)(1 - r t)}  &  \dfrac{(1 - r) r s^2_2}{1 - r t}  \\\hline
					A_0B_1C_1   &  0  &  \dfrac{r t h }{(1 - r s)(1 - r t)}  & 0  & 0 &  \dfrac{h}{1 - r s}  &  \dfrac{(1-r) t}{(1 - r s)(1 - r t)}  & ( 1 - r ) s  & \dfrac{(1 - r) r s^2_1}{1 - r s}  \\\hline
					A_1B_1C_1   &  \dfrac{r^2 s t h }{(1 - r s)(1 - r t)}  &  \dfrac{r t h }{1 - r t}  & \dfrac{r s h }{1 - r s} &  \dfrac{( 1 - r ) r s t}{(1 - r s)(1 - r t)} &  h  &  \dfrac{(1-r) t}{1 - r t}  & \dfrac{( 1 - r ) s}{1 - r s} &  0  \\\hline
				\end{array}
				\endgroup
				\label{hardybehavtable}
		\end{align}}
	\end{widetext}
	where $h:=(1-s-t+r s t)$, $r:=1-Co^2_\alpha Co^2_\beta Co^2_\gamma$, $s:=r^{-1}S^2_\alpha$ and $t:=r^{-1}S^2_\beta$. The behaviour $\vec{P}_{H}$ is nonlocal if and only if $p(+1,+1,+1|A_0,B_0,C_0)>0$, implying $r \in (0,1)$, $s \in (0,1)$ and $t \in \qty(0,\frac{1-s}{1-r s})$. Values of $\alpha$, $\beta$ and $\gamma$ can be obtained from $\alpha=2\sin^{-1} \sqrt{rs}$, $\beta=2\sin^{-1} \sqrt{rt}$, $\gamma=2\sin^{-1} \sqrt{\frac{rh}{(1 - r s)(1 - r t)}}$. The optimal quantum value of the Hardy probability is $p_H^{opt}=0.0181$ for $r=0.8392$ and $s=t=0.5436$ \cite{Adhikary2024}.
	
	Now, the three-qubit state in Eq.~(\ref{hardystate}) when expressed in the computational basis $\{\ket{0},\ket{1}\}$, and in terms of the parameters $r$, $s$ and $t$, reduces to the following form
	\begin{widetext}
		\begin{eqnarray} \label{hardystate}
			\ket{\psi(r,s,t)}_{H}&=&\sqrt{\frac{(1-r) r^2 s t h }{(1-r s)(1-r t)}}~\ket{000} - e^{\iota \phi}\sqrt{\frac{(1-r)r t h}{1-r t}}~\ket{100} \nonumber \\
			&-& e^{\iota \xi}\sqrt{\frac{(1-r)r s h}{1-r s}}~\ket{010}-e^{\iota \eta}\sqrt{\frac{(1-r)^2 r s t}{(1-r s)(1-r t)}}~\ket{001}+e^{\iota (\xi+\eta)}\sqrt{\frac{(1-r)^2 s}{1-r s}}~\ket{011}\nonumber \\
			&+&e^{\iota (\phi+\xi)}\sqrt{(1-r)h}~\ket{110}+e^{\iota (\phi+\eta)}\sqrt{\frac{(1-r)^2 t}{1-r t}}~\ket{ 101}+e^{\iota (\phi+\xi+\eta)}\sqrt{r}~\ket{111}    
		\end{eqnarray}
	\end{widetext}
	
	
	\section{Self-testing with nonmaximal violation of tripartite Hardy's nonlocality} \label{stth}
	
	Self-testing a quantum state involves performing a black-box experiment under the assumption of independent and identically distributed operations, where the quantum state and the measurements within the black box are unknown to Alice, Bob, and Charlie. This lack of knowledge about the dimensionality of the underlying Hilbert space prevents the parametrisation of the measurements and states employed in the experiment, unlike the approach used in Sec.~\ref{hardybehav} to evaluate the Hardy nonlocal behaviour described by Eq.(\ref{hardybehavtable}). The earlier evaluation of Eq.(\ref{hardybehavtable}) assumed that each local Hilbert space dimension is $2$ and that the measurements are projective. However, the observed behaviour could result from more general measurement operators of arbitrary dimension, specifically positive operator-valued measures (POVMs), applied to the shared quantum state rather than from projective measurements alone.
	
	Given that the dimensions of the state space and the observables are finite but unspecified, Neumark's dilation theorem enables us to confine the analysis to projective measurements \cite{Paulsen2003}. This simplification ensures that, without loss of generality, the dichotomic observables of Alice, Bob, and Charlie, acting on $\mathcal{H}^d$, can always be represented as follows
	\begin{equation}
		\mathbf{A}_x \equiv \{\mathbf{A}_{a|x}\}; \  \mathbf{B}_y \equiv \{\mathbf{B}_{b|y}\}; \  \mathbf{C}_z \equiv \{\mathbf{C}_{c|z}\}
	\end{equation}
	where $a,b,c\in\{\pm 1\}$ and $M_{\pm|\cdot}$ are projectors corresponding to the observables $M_{\cdot}$, satisfying $M_{\pm|\cdot}\geq 0$, $(M_{\pm|\cdot})^2=\openone$. The probability distribution observed from the black-box device is then expressed as
	\begin{equation}
		\mathfrak{p}(a,b,c|x,y,z,\boldsymbol{\rho})=\Tr[\boldsymbol{\rho}\mathbf{A}_{a|i}\mathbf{B}_{b|j} \mathbf{C}_{c|k}]
	\end{equation}
	where $\boldsymbol{\rho} \in \mathscr{L}(\mathcal{H}^d_A\otimes \mathcal{H}^d_B\otimes \mathcal{H}^d_C)$ be an arbitrary three-qudit state. We denote the observed behaviour as $\vec{\mathfrak{P}}\equiv\{\mathfrak{p}(a,b,c|x,y,z,\boldsymbol\rho)\}$ and bold-symbols are used to represent hermitian operators in arbitrary dimensional Hilbert space.
	
	Now, the Jordan lemma asserts that, for each party, there exists an orthonormal basis for $\mathcal{H}^d$ in which all four projectors $\{\mathbf{M}_{\pm|\cdot}\}$, corresponding to the two measurement choices per party, are simultaneously block diagonal, with a maximum block size of $2\times 2$ \cite{Masanes2006}. This orthonormal basis induces a direct sum decomposition, $\mathcal{H}=\oplus_m\mathcal{H}^m$, where each subspace $\mathcal{H}^m$ has a dimension of at most two. Consequently, the four projectors associated with each party decompose as $\mathbf{M}_{\pm|\cdot}=\oplus_m M_{\pm|\cdot}^m$, where each component $M_{\pm|\cdot}^m$ acts on the subspace $\mathcal{H}^m$. This implies that the dichotomic observables of Alice, Bob, and Charlie, as well as the shared three-qudit state, can be expressed as
	\begin{equation}
		\begin{aligned}
			&\boldsymbol\rho=\bigoplus_{m} \mu_{m} \rho_{m} \\
			&\mathbf{A}_x=\bigoplus_m A_x^{m}; \ \mathbf{B}_y=\bigoplus_m B_y^{m}; \ \mathbf{C}_z=\bigoplus_m C_z^{m} 
		\end{aligned}
	\end{equation}
	where $A_x^{m},B_y^{m},C_z^{m} \in \mathscr{L}(\mathcal{H}^2)$ and $\rho_{m}\in \mathscr{L}(\mathcal{H}^2_A\otimes \mathcal{H}^2_B\otimes \mathcal{H}^2_C)$. Each qubit measurement acting on $\rho_{m}$ produces a set of joint probability distributions, $\vec{P}_m$, given by
	\begin{equation}
		\vec{P}_m= \qty{p_m(a,b,c|x,y,z,\rho_{m})=\Tr[\rho_{m} A_{a|x}^{m} B_{b|y}^{m} C_{c|z}^{m}]}
	\end{equation}
	We refer to each such decomposition as a particular strategy, $S_m$, for producing the qubit-realised behaviour $\vec{P}_m$. Using the Jordan lemma, the observed probability $\mathfrak{p}(a,b,c|x,y,z,\boldsymbol{\rho})$ can be represented as a convex mixture of joint probabilities arising from each strategy $S_m$, as follows 
	\begin{equation}\label{jlprob}
		\begin{aligned}
			\mathfrak{p}(a,b,c|x,y,z,\boldsymbol\rho)&=\sum\limits_{m} \mu_{m} p_m(a,b,c|x,y,z,\rho_{m}) \\
			&=\sum\limits_{m} \mu_{m} \Tr[\rho_{m} A_{a|x}^{m} B_{b|y}^{m} C_{c|z}^{m}]
		\end{aligned}
	\end{equation}
	
	Note that if the observed Hardy probability, $\mathfrak{p}_H=\mathfrak{p}(+1,+1,+1|000,\boldsymbol\rho)$, attains its maximum value, it necessarily follows that all the Hardy probabilities, $p^m_H=p_m(+1,+1,+1|0,0,0,\rho_{m})$ arising from the qubit strategies must also reach their maximum value. From Eq.~(\ref{hardybehavtable}), it uniquely determines the values of $r$, $s$ and $t$, thereby fixing $\rho_{m}=\rho$ according to Eq.~(\ref{hardystate}). Thus, the maximum value of $\mathfrak{p}_H$ enables the self-testing of the pure nonmaximal three-qubit state $\rho$ using Jordan's lemma \cite{Rabelo2012, Rai2021, Adhikary2024,Adhikary2024pla}.
	
	However, if the observed Hardy-probability does not achieve its optimal value, such that $\mathfrak{p}_H=\delta<\mathfrak{p}_H^{opt}$, it no longer ensures that each individual three-qubit Hardy probability satisfies $p^m_H=\delta$. This is because there may exist some three-qubit states and qubit measurements for which $p^m_H \lesseqgtr \delta$, while the weights $\mu_{m}$ are adjusted in such a way that the overall probability becomes $\mathfrak{p}_H=\delta$. Consequently, relying solely on the three-qubit behaviour described by Eq.~(\ref{hardybehavtable}) does not guarantee a unique state and measurements to achieve such a Hardy-probability. Therefore, self-testing of three-qubit states from a nonmaximal violation of Hardy probability is not straightforward.
	
	Nevertheless, this challenge serves as the basis for the primary result of this paper, demonstrated through the following proposition.
	
	\begin{prop}
		If a black-box experiment records the following probabilities
		\begin{equation} \label{prop1rst}
			\begin{aligned}
				\mathfrak{p}(-1,-1,-1| i=0,j=0,k=0,\boldsymbol{\rho})&=\mathbf{r} \\
				\mathfrak{p}(+1,-1,-1| i=1,j=0,k=0,\boldsymbol{\rho})&=\mathbf{s} \\
				\mathfrak{p}(-1,+1,-1|i=0,j=1,k=0,\boldsymbol{\rho})&=\mathbf{t}
			\end{aligned}
		\end{equation}
		and if the remaining probabilities can be expressed in terms of $\mathbf{r}$, $\mathbf{s}$ and $\mathbf{t}$ in a manner consistent with the form of Eq.~(\ref{hardybehavtable}), then there exists a set of values for $r$, $s$ and $t$ such that the state of the unknown system is uniquely determined by these values. Specifically, the state satisfies $\boldsymbol{\rho}(\mathbf{r},\mathbf{s},\mathbf{t})=\bigoplus_{i} \mu_{i}\rho (r,s,t)$, where $\rho (r,s,t)=\ket{\psi(r,s,t)}_H\bra{\psi(r,s,t)}_H$. This means that $\boldsymbol{\rho}(\mathbf{r},\mathbf{s},\mathbf{t})$ is equivalent, up to local isometries, to $ \zeta \otimes\ket{\psi(r,s,t)}_H\bra{\psi(r,s,t)}_H$. Here, $\ket{\psi(r,s,t)}_H$ is the pure three-qubit nonmaximally entangled state defined by Eq.~(\ref{hardystate}) and, $\zeta$ is a arbitrary tripartite state.
	\end{prop}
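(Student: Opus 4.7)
The plan is to combine Jordan's lemma with the rigid functional form of Eq.~(\ref{hardybehavtable}) to force every qubit block appearing in the Jordan decomposition to realize the same pure Hardy state, and then package the result with a standard flag-register isometry.

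First I would apply Jordan's lemma, already set up in Eq.~(\ref{jlprob}), to write $\boldsymbol\rho = \bigoplus_m \mu_m \rho_m$ with each $\rho_m$ on $\mathcal{H}_A^2\otimes\mathcal{H}_B^2\otimes\mathcal{H}_C^2$ and the six local projectors simultaneously $2\times 2$ block diagonal, so that $\mathfrak{p}(a,b,c|x,y,z,\boldsymbol\rho)=\sum_m\mu_m p_m(a,b,c|x,y,z,\rho_m)$. The hypothesis that the observed behavior has the functional form of Eq.~(\ref{hardybehavtable}) automatically enforces the four Hardy zero-probability constraints of Eq.~(\ref{tri}), since these appear as the zero entries of the table. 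Because every $\mu_m p_m$ is non-negative, each block $\rho_m$ with $\mu_m>0$ must itself satisfy these zero constraints. The characterisation of Sec.~\ref{hardybehav} then applies blockwise: each $\rho_m$ is forced to be a pure Hardy state $\ket{\psi(r_m,s_m,t_m)}_H$ of the form Eq.~(\ref{hardystate}), and the block behavior $\vec{P}_m$ takes the functional form of Eq.~(\ref{hardybehavtable}) for some triple $(r_m,s_m,t_m)$.

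The crucial step is a rigidity argument: if the convex combination $\sum_m\mu_m \vec{P}_m$ itself has the functional form of Eq.~(\ref{hardybehavtable}) for a single triple $(\mathbf r,\mathbf s,\mathbf t)$, then $(r_m,s_m,t_m)=(\mathbf r,\mathbf s,\mathbf t)$ for every $m$ with $\mu_m>0$. The linear averages $\mathbf r=\sum_m\mu_m r_m$, $\mathbf s=\sum_m\mu_m s_m$, $\mathbf t=\sum_m\mu_m t_m$ follow immediately from the three defining entries of Eq.~(\ref{prop1rst}). Matching any nonlinear entry of the table yields polynomial/rational identities such as
\begin{equation*}
\sum_m\mu_m(1-r_m)(1-s_m-t_m+r_m s_m t_m) = (1-\mathbf r)(1-\mathbf s-\mathbf t+\mathbf r\mathbf s\mathbf t),
\end{equation*}
and, combined with one or two analogous identities from the cubic $A_0B_0C_0,(+,+,+)$ and $A_1B_1C_1,(-,-,-)$ cells, strict convexity of these multilinear expressions on the Hardy-admissible domain $r,s,t\in(0,1)$ with $t<(1-s)/(1-rs)$ forces $r_m=\mathbf r$, $s_m=\mathbf s$, $t_m=\mathbf t$ uniformly. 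An equivalent conceptual route is to invoke the extremality of each Hardy-parameterised point inside $\mathcal{Q}$, mirroring the bipartite self-testing arguments of \cite{Rabelo2012, Rai2021}. I expect this rigidity/extremality step to be the main obstacle.

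Finally, once block-parameter uniformity is established, setting $(r,s,t):=(\mathbf r,\mathbf s,\mathbf t)$ gives $\boldsymbol\rho=\bigoplus_m\mu_m\ket{\psi(r,s,t)}_H\bra{\psi(r,s,t)}_H$. A standard local isometry on each party that coherently copies the Jordan block index onto a local ancillary flag register then maps $\boldsymbol\rho$ into $\zeta\otimes\ket{\psi(r,s,t)}_H\bra{\psi(r,s,t)}_H$, where $\zeta$ is the classical flag state with weights $\{\mu_m\}$, completing the self-testing claim of the proposition.
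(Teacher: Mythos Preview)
Your architecture---Jordan reduction to qubit blocks, blockwise Hardy constraints, a rigidity step pinning the block parameters, then a flag-register isometry---matches the paper's. The divergence, and the real gap in your proposal, is in the rigidity step.

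You assert that matching several nonlinear entries of Eq.~(\ref{hardybehavtable}), together with ``strict convexity of these multilinear expressions on the Hardy-admissible domain,'' forces $(r_m,s_m,t_m)=(\mathbf r,\mathbf s,\mathbf t)$ for every block. This is the crux, and you have not verified it. The paper isolates this step as Lemma~\ref{Lemma1}: for a chosen function $\Omega(\mathbf r,\mathbf s,\mathbf t)$ built from the table (in their application, the Hardy success probability $\Omega^*$ itself), one constructs its \emph{concave cover} $\mathcal E$ and uses Jensen's inequality to conclude that the block parameters collapse precisely on the set $\mathfrak D$ where $\mathcal E=\Omega$ and $\Omega$ is strictly concave. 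The paper then shows---via a symbolic Hessian computation and a numerical convex-hull (QHull) construction of $\mathcal E^*$---that $\Omega^*$ is \emph{not} strictly concave on all of $(0,1)^3$; the self-testing conclusion is therefore established only on a proper subregion $\mathfrak D^*\subsetneq(0,1)^3$, determined numerically (Fig.~\ref{fig:Hess} and Fig.~\ref{fig:chulldia}). Your blanket convexity claim is thus too optimistic for any single table entry, and you give no argument that combining several entries actually closes the gap (the paper itself flags precisely this as an open direction in Sec.~\ref{conclu}). Your fallback to extremality in $\mathcal Q$ is not independently established either; in the paper, membership in $\mathfrak D^*$ is what \emph{yields} extremality, not an input to the argument.
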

	
	The central idea underpinning the proof of our claim is derived from the Jordan lemma, Neumark's dilation theorem, and the concept of a concave cover. 
	
	In a black-box experiment, let the observed Hardy probability be $\mathfrak{p}_H=\delta$, where $\delta\in(0,0.0181)$, and let the observed values of $\mathbf{r}$, $\mathbf{s}$ and $\mathbf{t}$ be determined by the corresponding joint probabilities given by Eq.~(\ref{prop1rst}). Denote the set of behaviours arising from qubit strategies as $\{\vec{P}_m\}$. Generally, these behaviours can be classified into two types - (i) $\{\vec{P}'_m | p^m_H\lesseqgtr \delta \}$, and (ii) $\{\vec{P}^{*}_m | p^m_H = \delta \}$. 
	
	In the following, we demonstrate the existence of a set of 3-tuples $(r_i,s_j,t_k)$ for which the observed behaviour admits a unique decomposition of the form $\vec{\mathfrak{P}}=\sum_{ijk} \mu_{ijk} \vec{P}^{*}_{ijk}$ for each value of $\delta$. Such decompositions are the extremal points of the convex hull of the set of points $\{r_i,s_j,t_k,\Omega(r_i,s_j,t_k)\}$, where 
	\begin{equation}
		\Omega(r_i,s_j,t_k)=\frac{r_i^2(1 - r_i)s_jt_k(1 - s_j- t_k+ r_is_jt_k)}{(1 - r_i s_j)(1 - r_i t_k)}
	\end{equation}
	Thus, the goal is to characterise the points $(r_i,s_j,t_k)$ where the function $\Omega(r_i,s_j,t_k)$ produces the same value $\delta$ for all decompositions. Specifically, we aim to identify the conditions under which we can conclude that $r_{i}=\mathbf{r}$; $\ s_{j}=\mathbf{s}$; $\ t_{k}=\mathbf{t}$. By extending the reasoning applied to the maximal Hardy probability, each Hardy value will correspond to a unique set of 3-tuples, such that the Hardy probabilities derived from the qubit strategies match the observed value. This establishes the self-testing of a broad class of pure nonmaximally entangled three-qubit states.
	
	Now, let us define a function of $(r_i,s_j,t_k)$ using the probabilities in Eq.~(\ref{hardybehavtable}) as
	\begin{equation}
		\Omega(\mathbf{r},\mathbf{s},\mathbf{t})=\sum_{a,b,c,x,y,z} c_{abcxyz} ~p(a,b,c|x,y,z,\rho)+c_0
	\end{equation}
	where $c_{abcxyz}$ and $c_0$ are some real coefficients. By applying Eq.~(\ref{jlprob}) to each probability term, we obtain
	\begin{equation}\label{eq8}
		\Omega (\mathbf{r},\mathbf{s},\mathbf{t})= \sum_{i,j,k} \mu_{ijk} \Omega (r_{i}, s_{j}, t_{k})
	\end{equation}
	where the values of $r_i$, $s_j$ and $t_k$ are given by
	\begin{equation}
		\begin{aligned}
			r_{i}&=p(-1,-1,-1| 0,0,0,\rho) \\
			s_{j}&=p(+1,-1,-1|1,0,0,\rho) \\
			t_{k}&=p(-1,+1,-1|0,1,0,\rho)
		\end{aligned}
	\end{equation}
	Furthermore, when the black-box statistics satisfies the zero constraints of Hardy’s test, the same holds for every subspace.
	
	To formalise our result, we extend Lemma 1 from \cite{Rai2022} to our considered scenario.
	
	\begin{lem}\label{Lemma1}
		Let {\small$\mathcal{E}(\mathbf{r},\mathbf{s},\mathbf{t}):(0,1)^3 \rightarrow  \mathbb{R}$} be the concave cover of {\small$\Omega (\mathbf{r},\mathbf{s},\mathbf{t})$}, and suppose $\mathfrak{D}$ is the set of points from the domain for which {\small$\mathcal{E} (\mathbf{r},\mathbf{s},\mathbf{t})=\Omega (\mathbf{r},\mathbf{s},\mathbf{t})$}. Then $\forall \mathbf{r},\mathbf{s},\mathbf{t}\in \mathfrak{D}$, and $\forall i,j,k$
		\begin{equation}
			r_{i}=\mathbf{r}; \ s_{j}=\mathbf{s}; \ t_{k}=\mathbf{t}
		\end{equation}
		provided that $\Omega(\mathbf{r},\mathbf{s},\mathbf{t})$ is a strictly concave function of $\mathbf{r},\mathbf{s},\mathbf{t}$ over the region $\mathfrak{D}$. 
	\end{lem}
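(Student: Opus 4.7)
The plan is to prove the lemma by combining the Jordan-lemma decomposition of Eq.~(\ref{jlprob}) with two standard properties of the upper concave envelope: Jensen's inequality for $\mathcal{E}$ itself, and the defining pointwise domination $\mathcal{E}\geq \Omega$. First I would record the linearity consequence of Eq.~(\ref{jlprob}): since $\mathfrak{p}$ is a convex combination of the qubit-strategy probabilities with weights $\mu_{ijk}$, applying this to the three specific marginals in Eq.~(\ref{prop1rst}) gives
\begin{equation*}
(\mathbf{r},\mathbf{s},\mathbf{t}) \;=\; \sum_{i,j,k}\mu_{ijk}\,(r_i,s_j,t_k),
\end{equation*}
so the observed tuple is literally the barycentre of the qubit-strategy tuples. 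This is what makes Jensen-type reasoning applicable at all.

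Next I would chain the inequalities. Using $(\mathbf{r},\mathbf{s},\mathbf{t})\in\mathfrak{D}$, concavity of $\mathcal{E}$, the pointwise bound $\mathcal{E}\geq\Omega$, and Eq.~(\ref{eq8}):
\begin{equation*}
\mathcal{E}(\mathbf{r},\mathbf{s},\mathbf{t}) \geq \sum_{ijk}\mu_{ijk}\,\mathcal{E}(r_i,s_j,t_k) \geq \sum_{ijk}\mu_{ijk}\,\Omega(r_i,s_j,t_k) = \Omega(\mathbf{r},\mathbf{s},\mathbf{t}) = \mathcal{E}(\mathbf{r},\mathbf{s},\mathbf{t}),
\end{equation*}
so the chain begins and ends at the same value and every inequality must collapse to equality. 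The middle equality, combined with $\mu_{ijk}\geq 0$ and $\mathcal{E}\geq \Omega$ pointwise, forces $\mathcal{E}(r_i,s_j,t_k)=\Omega(r_i,s_j,t_k)$ for every triple carrying positive weight, thereby placing each such $(r_i,s_j,t_k)$ inside $\mathfrak{D}$. The leftmost equality then reads $\Omega\bigl(\sum \mu_{ijk}(r_i,s_j,t_k)\bigr)=\sum \mu_{ijk}\,\Omega(r_i,s_j,t_k)$ with both the support points and the barycentre lying in $\mathfrak{D}$, so the strict concavity hypothesis on $\Omega$ over $\mathfrak{D}$ forces all the $(r_i,s_j,t_k)$ with positive weight to coincide with $(\mathbf{r},\mathbf{s},\mathbf{t})$, which is the desired identity.

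The main obstacle I anticipate is not the chain itself, which is a standard Bauer-type concave-envelope argument, but rather the careful bookkeeping around which blocks the conclusion applies to: strictly speaking only the qubit strategies with $\mu_{ijk}>0$ are pinned down, and one must justify separately that zero-weight blocks are operationally irrelevant for the subsequent self-testing conclusion. A secondary point is that the convex combination of $\mathfrak{D}$-points must itself lie in $\mathfrak{D}$ for the strict-concavity hypothesis to be invoked legitimately; here this is automatic because the barycentre is $(\mathbf{r},\mathbf{s},\mathbf{t})$, which lies in $\mathfrak{D}$ by assumption, but one should check that the same is true when the argument is iterated on a sub-collection of support points.
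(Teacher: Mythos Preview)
Your proposal is correct and follows essentially the same concave-envelope argument as the paper: Jensen's inequality for $\mathcal{E}$, the pointwise domination $\mathcal{E}\geq\Omega$, Eq.~(\ref{eq8}), and the hypothesis $(\mathbf{r},\mathbf{s},\mathbf{t})\in\mathfrak{D}$ are combined into a sandwich that collapses to equality, whereupon strict concavity of $\Omega$ forces all support points to coincide with the barycentre. Your intermediate step---deducing from the middle equality that every positively-weighted $(r_i,s_j,t_k)$ already lies in $\mathfrak{D}$---is a detail the paper leaves implicit but which is needed to legitimately invoke strict concavity of $\Omega$ (rather than of $\mathcal{E}$, which is merely concave) in the final step.
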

	
	\begin{proof}
		A \textit{concave cover} $\mathcal{E}(\mathbf{r},\mathbf{s},\mathbf{t})$, for the function $\Omega(\mathbf{r},\mathbf{s},\mathbf{t})$ over its domain $D\in(0,1)^3$, is defined as the smallest concave overestimator of $\Omega(\mathbf{r},\mathbf{s},\mathbf{t})$ over $D$. Formally, for a function $\Omega:D\to \mathbb{R}$, the concave cover $\mathcal{E}(\mathbf{r},\mathbf{s},\mathbf{t})$ is
		\begin{equation}\label{concov}
			\mathcal{E}(\mathbf{r},\mathbf{s},\mathbf{t})=\sup \qty{\omega(\mathbf{r},\mathbf{s},\mathbf{t})|\omega\geq\Omega(\mathbf{r},\mathbf{s},\mathbf{t}) \ \text{on} \ D}
		\end{equation}
		
		Applying Jensen's inequality\footnote{Jensen's inequality \cite{Durrett2019book}: For any concave real-valued function $f(x):\mathbb{R}^n\to \mathbb{R}$, the following inequality holds
			\begin{equation}\label{ji}
				f\qty(\sum_{m=1}^n \mu_m x_m) \geq \sum_{m=1}^n \mu_m \ f\qty(x_m)
			\end{equation}
			where $x_m\in\mathbb{R}^n$ and $\mu_m\geq 0$ for all $m$, with $\sum_m \mu_m=1$. Equality is achieved \textit{iff} $x_1=x_2=\ldots=x_n$, provided that $f(x)$ is nonlinear.}  to the function $\mathcal{E}(\mathbf{r},\mathbf{s},\mathbf{t}):\mathbb{R}^3\rightarrow \mathbb{R}$, we obtain
		\begin{equation}\label{eq9}
			\mathcal{E}\qty(\sum_{i,j,k} \ \mu_{ijk} (r_{i}, s_{j}, t_{k}))
			\geq \sum_{ijk} \mu_{ijk} \ \mathcal{E}\qty(r_{i},s_{j}, t_{k})
		\end{equation}
		Moreover in the domain when $\Omega$ is strictly concave, we have $\mathcal{E}(\mathbf{r},\mathbf{s},\mathbf{t}) =\Omega (\mathbf{r},\mathbf{s},\mathbf{t})$. Since, by the definition of concave cover, $ \Omega(r_{i}, s_{j}, t_{k})\leq\mathcal{E}(r_{i}, s_{j}, t_{k})$, using Eq.~\eqref{eq8}, we get
		\begin{equation}\label{eq10}
			\mathcal{E}(\mathbf{r},\mathbf{s},\mathbf{t}) =\sum_{i,j,k} \mu_{ijk} \ \Omega(r_i, s_j, t_k) 
			\leq \sum_{i,j,k} \mu_{ijk}~\mathcal{E}(r_i, s_j, t_k).
		\end{equation}
		Thus, Eqs.~(\ref{eq9}) and Eq.~(\ref{eq10}) together imply that in the domain $(\mathbf{r},\mathbf{s},\mathbf{t})\in \mathfrak{D}$
		\begin{equation}\label{eq11}
			\mathcal{E}\qty(\sum_{i,j,k} \mu_{ijk}( r_i,s_j,t_k)) = \sum_{ijl} \mu_{ijk} \ \mathcal{E}\qty(r_i, s_j, t_k). 
		\end{equation}
		Therefore, if $\Omega(\mathbf{r},\mathbf{s},\mathbf{t})$ is a strictly concave function of $\mathbf{r}$, $\mathbf{s}$ and $\mathbf{t}$ over the region $\mathfrak{D}$, then in every subspace $\mathcal{H}_A^i\otimes \mathcal{H}_B^j\otimes \mathcal{H}_C^k$, the parameter values  $(r_i,s_j,t_k)$ must be identical, i.e., $r_i=\mathbf{r}$, $s_j=\mathbf{s}$ and $t_k=\mathbf{t}$ for all $i,j,k$. 
	\end{proof}
	
	Let us now analyse an example where Lemma~\ref{Lemma1} is applicable. We consider a function defined by the Hardy success probability
	\begin{equation}\label{eq12}
		\begin{aligned}
			\Omega^{\ast}(\mathbf{r},\mathbf{s},\mathbf{t}) & :=  \mathfrak{p}(+1,+1,+1|0,0,0,\boldsymbol{\rho}) \\
			&=\frac{\mathbf{r}^2(1 - \mathbf{r})\mathbf{s}\mathbf{t}(1 - \mathbf{s}- \mathbf{t}+ \mathbf{r}\mathbf{s}\mathbf{t})}{(1 - \mathbf{r} \mathbf{s})(1 - \mathbf{r} \mathbf{t})}.
		\end{aligned}
	\end{equation}
	The function $\Omega^{\ast}(\mathbf{r},\mathbf{s},\mathbf{t})$ is nonlinear. Note that the concave cover of a function is the function itself if the function is strictly concave. Otherwise, one must define an approximate function which overestimates the function, and such a construction for any general function is challenging \cite{Tawarmalani2013, Benameur2017}. To verify concavity of $\Omega^{*}$, we compute its Hessian matrix\footnote{For a three-variable function, the hessian matrix is given by: \[
		\mathbf{H}_{\Omega^{*}}(\mathbf{r}, \mathbf{s}, \mathbf{t}) = \begin{bmatrix}
			\frac{\partial^2 \Omega^{*}}{\partial \mathbf{r}^2} & \frac{\partial^2 \Omega^{*}}{\partial \mathbf{r} \partial \mathbf{s}} & \frac{\partial^2 \Omega^{*}}{\partial \mathbf{r} \partial \mathbf{t}} \\
			\frac{\partial^2 \Omega^{*}}{\partial \mathbf{s} \partial \mathbf{r}} & \frac{\partial^2 \Omega^{*}}{\partial \mathbf{s}^2} & \frac{\partial^2 \Omega^{*}}{\partial \mathbf{s} \partial \mathbf{t}} \\
			\frac{\partial^2 \Omega^{*}}{\partial \mathbf{t} \partial \mathbf{r}} & \frac{\partial^2 \Omega^{*}}{\partial \mathbf{t} \partial \mathbf{s}} & \frac{\partial^2 \Omega^{*}}{\partial \mathbf{t}^2}
		\end{bmatrix}
		\]
	} \cite{Boyd2009book}. In particular, a function is strictly concave if all the eigenvalues of the Hessian matrix is negative everywhere on $D$. Conversely, the function is strictly convex if all the eigenvalues are positive across $D$. If the Hessian exhibits both positive and negative eigenvalues, the function is deemed neither convex nor concave over the domain. 
	
	To evaluate these properties computationally, we discretise the domain $D$ into a fine grid of points. At each grid point $(\mathbf{r},\mathbf{s},\mathbf{t})$,  the second-order partial derivatives of the function $\Omega^{*}$ are calculated symbolically using SymPy, a Python library for symbolic mathematics \cite{sympy}. These derivatives are then assembled into the Hessian matrix comprising all second-order mixed partial derivatives. Using NumPy \cite{Harris2020}, the eigenvalues of the Hessian at each grid point are numerically computed and analysed to determine the definiteness of the matrix.
	
	This numerical analysis conducted using Python reveals that $\Omega^{*}$ is strictly concave only in a subdomain $\mathfrak{D}^{*} \subset D$, while elsewhere it is indefinite, i.e., neither concave nor convex (see Fig.~\ref{fig:Hess}).
	
	\begin{figure}[htpb!]
		\centering
		\includegraphics[width=\linewidth]{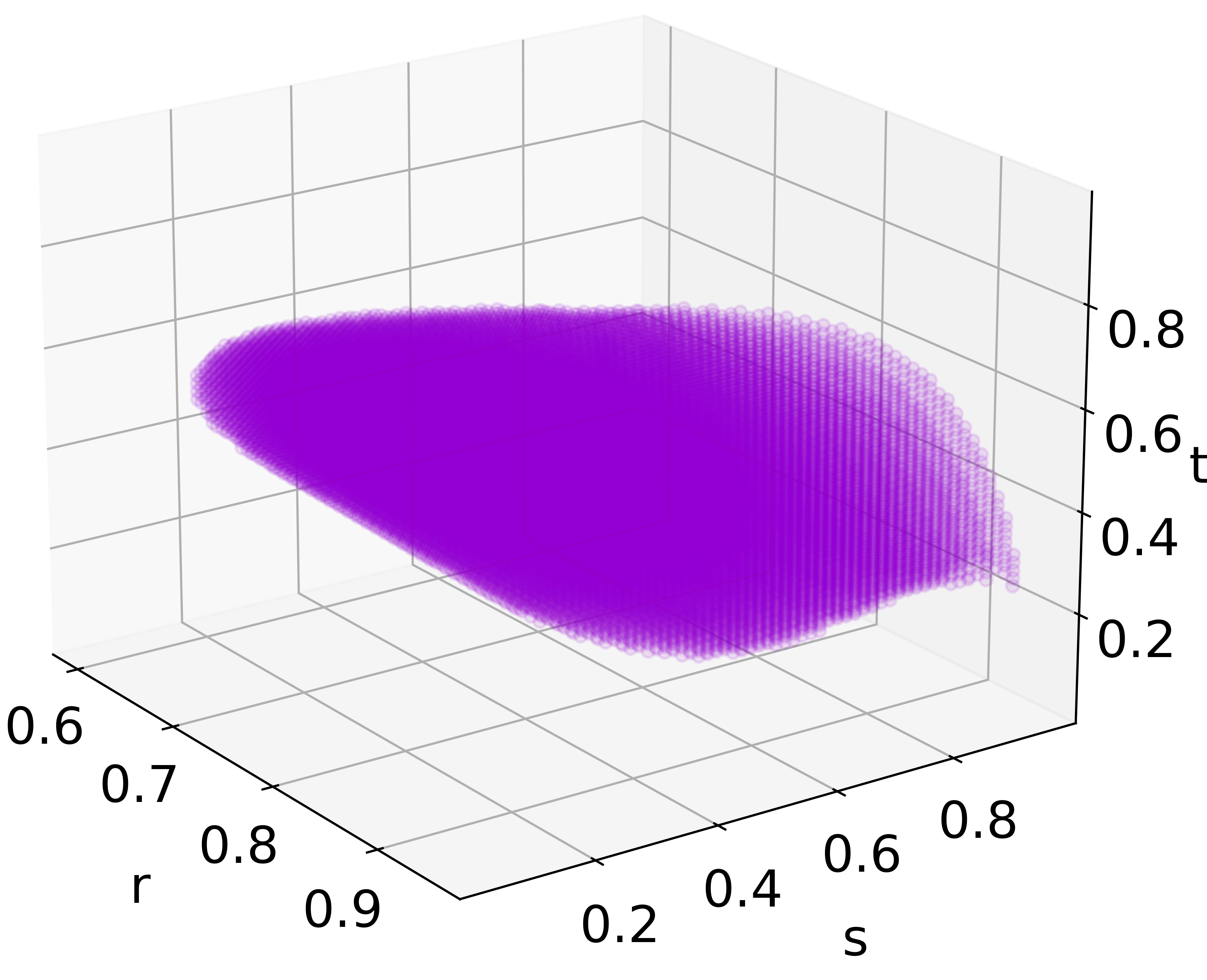}
		\caption{The violet curve lies within the three-dimensional region in $(\mathbf{r},\mathbf{s},\mathbf{t})\in(0,1)^3$ space, representing the subset where all the eigenvalues of the Hessian is negative. In the remaining region, all the eigenvalues are not negative. This implies that $\Omega^{*}$ is strictly concave only over the violet-coloured subspace of $(\mathbf{r},\mathbf{s},\mathbf{t})$.}
		\label{fig:Hess}
	\end{figure}
	This partitioning of the domain is critical in the construction of the concave envelope, which, as discussed later, is only defined via convex hull methods over regions where the function fails to be strictly concave.
	
	Since the function is not strictly concave over its entire domain, the analytical construction of the concave cover function and the characterisation of the corresponding region is a challenging problem to solve. Therefore, we take recourse to a numerical approach using Python to construct the concave cover.
	
	In the following, we show that for this function, there exists a concave cover $\mathcal{E}^{\ast}(\mathbf{r},\mathbf{s},\mathbf{t})$ and a region $\mathfrak{D}^{\ast}\subset (0,1)\times(0,1)\times(0,1)$ such that $\mathcal{E}^{\ast}(\mathbf{r},\mathbf{s},\mathbf{t})=\Omega^{\ast}(\mathbf{r},\mathbf{s},\mathbf{t})$. 
	
	From the definition of the concave cover given by Eq.~(\ref{concov}), the supremum can be viewed geometrically as the upper boundary of the convex hull of the hypograph of the function $\Omega^{*}$ over $D$.  The hypograph of $\Omega^{*}$ is given by \cite{Boyd2009book}
	\begin{equation}
		hypo(\Omega^{*})=\qty{(\mathbf{r},\mathbf{s},\mathbf{t},\omega(\mathbf{r},\mathbf{s},\mathbf{t}))\in D \times \mathbb{R}|\omega \leq \Omega^{*}}    
	\end{equation}
	By discretising $D$ into a grid of points and evaluating the corresponding values of $\Omega^{*}$ for each grid point, we compute the convex hull of $hypo(\Omega^{*})$. This is evaluated using Python's \textit{scipy.spatial.ConvexHull} implementation \cite{Virtanen2020}, which interfaces with the QHull algorithm via compiled C libraries \cite{Barber1996}.
	
	QHull is a robust algorithm for computing the convex hull of a set of points in $n$-dimensional space. In two dimensions $(n=2)$,  it begins by identifying the leftmost and rightmost points to form a baseline, then recursively selects the point farthest from this line. The algorithm divides the remaining points into subsets and continues this process, progressively constructing the convex hull. In higher dimensions, QHull generalises this approach by constructing the convex hull as a polytope composed of facets. It begins with an initial simplex (e.g., a line segment in $2D$, a triangle in $3D$, or a tetrahedron in $4D$) and incrementally adds new points. For each new point, the algorithm identifies which facets are visible from it, removes these facets, and creates new ones by connecting the point to the boundary of the visible region. The extremal points of the resulting convex hull—accessible via \textit{hull.vertices}, define the concave cover of $\Omega^{*}$ over $D$. The complete Python implementation for this procedure is detailed in \cite{gitadhikary}.
	
	It is important to note that the concave cover of $\Omega^{*}$ can only be computed over a convex domain $D$. However, not all regions within the domain yield valid Hardy probability values; in particular, there may be subregions where $\Omega^{*}\leq 0$ (see Fig.~\ref{fig:chulldia}). Thus, after constructing the concave cover, one must filter out points for which $\Omega^{*}\leq 0$, thereby isolating the relevant subregion where $\Omega^{*}>0$. This subregion constitutes the effective concave sub-cover and is illustrated in Fig.~\ref{fig:chulldia}.
	
	\begin{figure*}[t]
		\centering
		\includegraphics[width=0.32\textwidth]{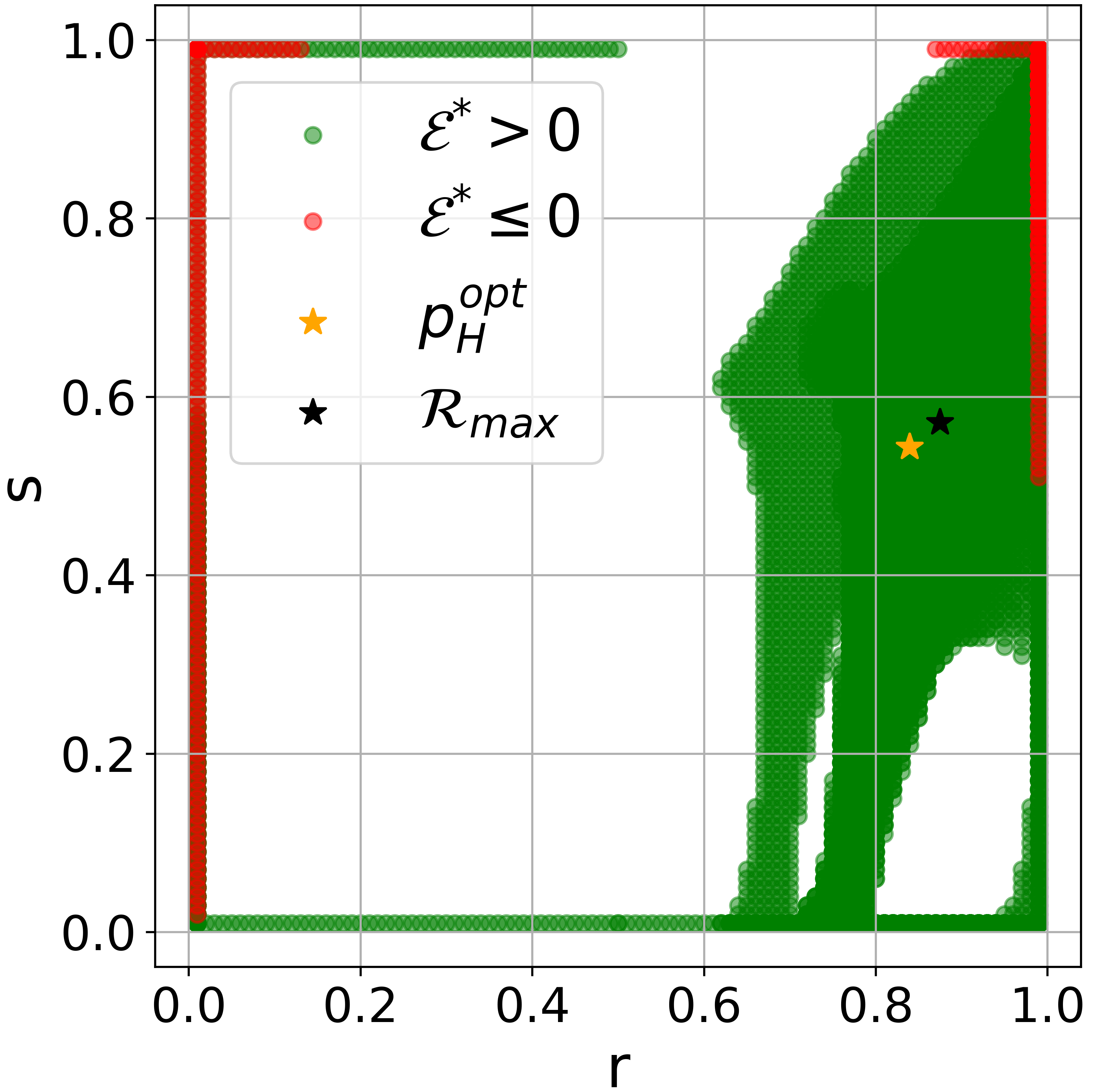}
		\hfill
		\includegraphics[width=0.32\textwidth]{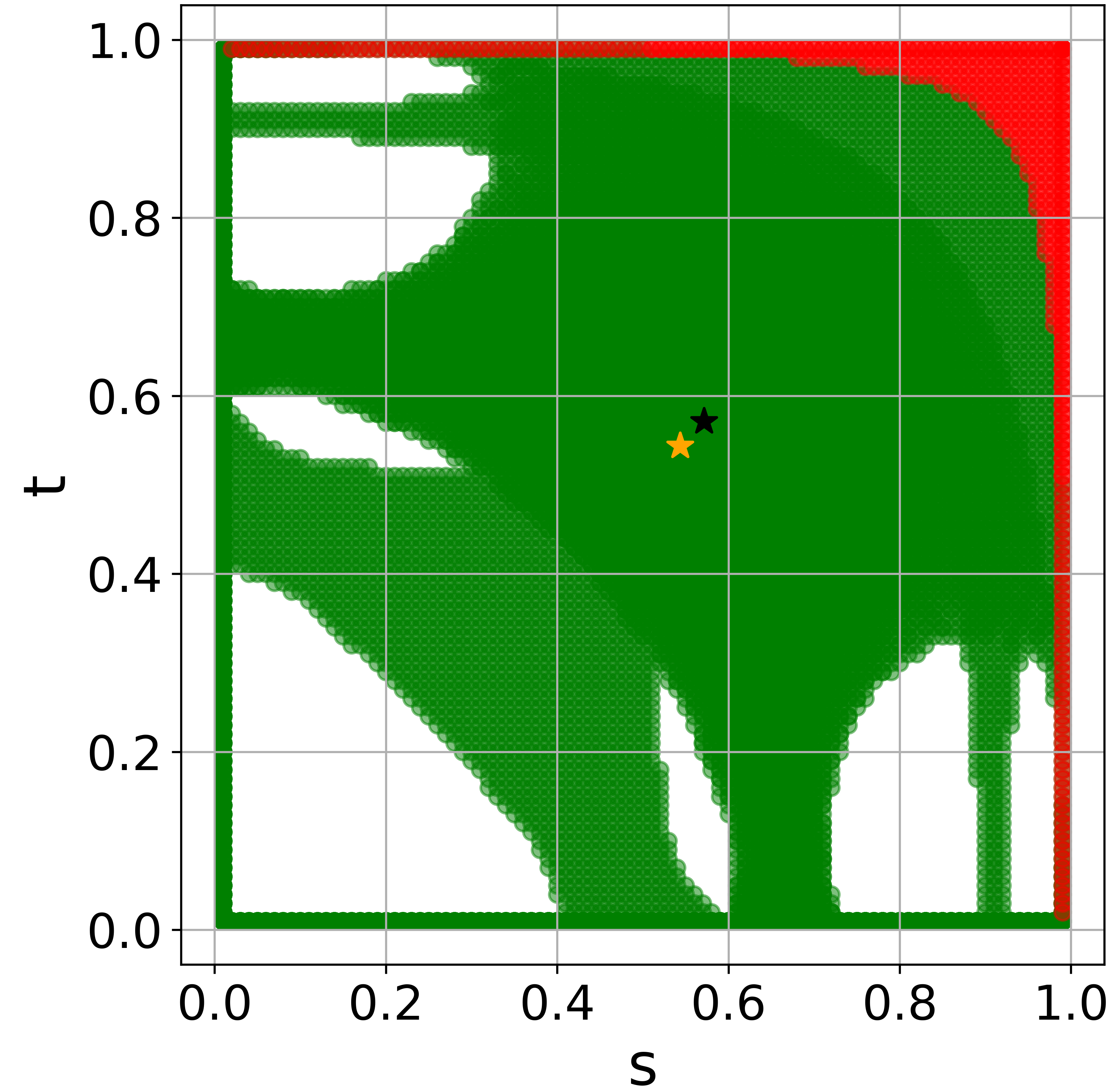}
		\hfill
		\includegraphics[width=0.32\textwidth]{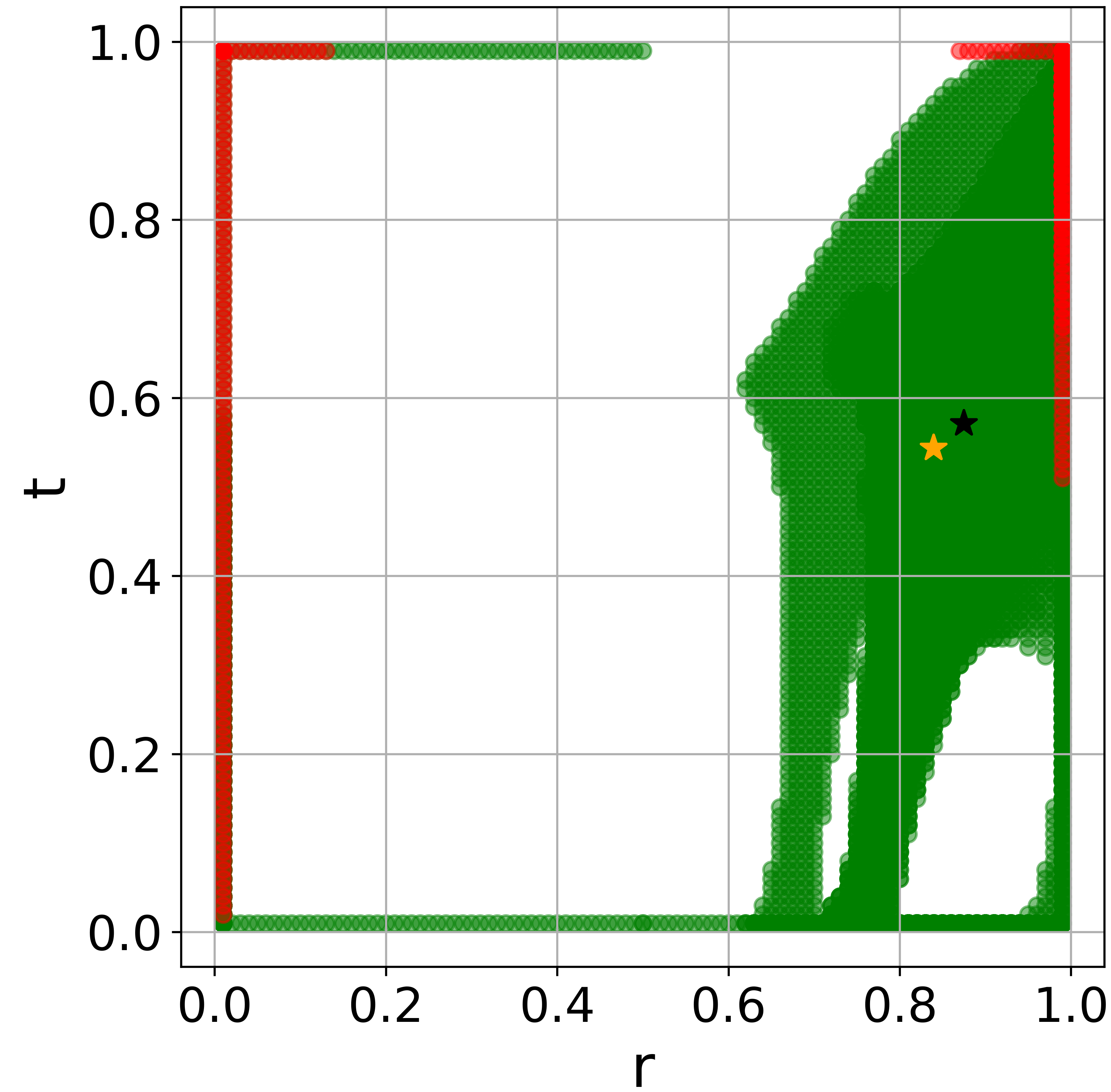}
		\caption{Extremal points of the Convex-Hull of $(\mathbf{r},\mathbf{s},\mathbf{t},\omega^{*})$ projected onto three different $2$-D planes. The computed concave cover $\mathcal{E}^*$ is depicted via the extremal points in these projections. Green (red) points correspond to region where $\mathcal{E}^*>0$ ($\mathcal{E}^*\leq 0$). The points of maximal Hardy violation $(p_H^{opt})$ and maximal Hardy-certified randomness $(\mathcal{R}_{max})$ are shown as yellow and black star markers, respectively. Their extremality implies they are self-tested points of particular interest.}
		\label{fig:chulldia}
	\end{figure*}   
	
	Therefore, based on the considered example and the application of Lemma~\ref{Lemma1}, we conclude that if the black-box probability distribution satisfies $(\mathbf{r},\mathbf{s},\mathbf{t})\in \mathfrak{D}^{\ast}$, then in every subspace of $\mathcal{H}_A^i\otimes \mathcal{H}_B^j\otimes \mathcal{H}_C^k$, it necessarily holds that $r_i=\mathbf{r}$, $s_j=\mathbf{s}$ and $t_k=\mathbf{t}$. This, in turn, enables the self-testing of a class of pure, nonmaximally entangled three-qubit states of the form given by Eq.~(\ref{hardystate}).
	
	Note that, without loss of generality, the phases $\xi$, $\phi$ and $\eta$ can be considered to be independent of the indices $i$, $j$ and $k$ across all subspaces $\mathcal{H}_A^i\otimes \mathcal{H}_B^j\otimes \mathcal{H}_C^k$. This is achieved by appropriately selecting the measurement basis and applying suitable local unitary operations to the state, thereby preserving the observed behaviour $\{p_{m}(a,b,c\vert x,y,z,\rho_m)\}$.
	
	To conclude, suppose that $\vec{\mathfrak{P}}_{Hardy}(\mathbf{r},\mathbf{s},\mathbf{t})$ is observed in a black-box experiment. Lemma~\ref{Lemma1} implies the existence of a tuple $(\mathbf{r},\mathbf{s},\mathbf{t})$ such that, in every subspace, $r_i=\mathbf{r}$, $s_j=\mathbf{s}$ and $t_k=\mathbf{t}$. Consequently, the relation $\vec{P}_{Hardy}(r_i,s_j,t_k)=\vec{\mathfrak{P}}_{Hardy}(\mathbf{r},\mathbf{s},\mathbf{t})$ holds if and only if each subspace state $\rho_{ijk}=\ket{\psi}_{H}\bra{\psi}_{H}$, where $\ket{\psi}_{H}$ is the Hardy state determined by the parameters $(\mathbf{r},\mathbf{s},\mathbf{t})$ as specified in Eq.~(\ref{hardystate}).
	
	Finally, it is always possible to construct local isometries $\Phi^A$, $\Phi^B$ and $\Phi^C$ such that, when applied to the system's Hilbert space and an appropriate ancillary space, they map the unknown state $\ket{\chi}$ of arbitrary dimension onto a known reference Hardy state within a fixed-dimensional ancillary space. Explicitly, this transformation is given by
	\begin{equation}
		\Phi\ket{\chi}^{ABC} \otimes \ket{000}^{A'B'C'}=\zeta^{ABC}\otimes\ket{\psi(r,s,t)}_H^{A'B'C'}   
	\end{equation}
	where $\Phi=\Phi^A\otimes\Phi^B\otimes\Phi^C$, and $\zeta^{ABC}$ represents an auxiliary state associated with the original system.
	
	
	\section{Certification of Randomness}\label{randomness}
	
	Our self-testing scheme enables the generation of device-independent certified randomness. We show that whenever tripartite Hardy nonlocality is observed, a nonzero amount of global randomness can be certified.
	
	While randomness certification in the bipartite setting has been widely studied \cite{Pironio2010, Colbeck2012, Colbeck2022, Zhao2023, Sasmal2024}, results in the tripartite case remain scarce. In the simplest bipartite two measurement-two outcomes (2-2-2) scenario, up to $2$-bits of global randomness can be certified using a class of tilted-Bell inequalities \cite{Wooltorton2024}. For the tripartite 3-2-2 case, although the theoretical maximum is $3$-bit, no known protocol achieves this based on Genuine nonlocality. Notably, a maximum Svetlichny inequality violation allows certification of approximately $\qty(3-\log_2(1+\frac{1}{\sqrt{2}}))\approx 2.2284$ bits of randomness. Recently, using Mermin inequality \cite{Mermin1990} authors in \cite{Woodhead2018}, provided tight analytic bound on guessing probabilities associated with one and two of the parties’ measurement outcomes as a function of the Mermin inequality violation. While the maximum violation of Mermin inequality certifies $3$-bit of global randomness, Mermin inequality does not certify genuine nonlocality. 
	
	Here, we demonstrate that the tripartite Hardy argument allows certification of up to $\log_2(7)\approx 2.80734$-bit of randomness—nearing the theoretical limit.
	
	We begin by justifying the certification procedure, showing how Hardy’s nonlocality test in the tripartite setting implies genuine randomness. This can be achieved by re-deriving the tripartite Hardy argument under the assumptions of predictability and nosignalling. While CHSH or Hardy violations rule out predictable nosignalling models in the bipartite case \cite{Masanes2006, Cavalcanti2012, Sasmal2024}, the tripartite case is subtler. By translating the SI/OI framework into the ontological model, we can define predictability and the nosignalling principle across all parties, assuming a reproducible preparation procedure as introduced in \cite{Cavalcanti2012}. A straightforward algebraic evaluation shows that a nonzero Hardy probability implies the failure of predictability in the global joint probability distribution (and hence for local probability also), even if two parties share the strongest form of nosignalling correlation (PR-box), ensuring intrinsic unpredictability. Importantly, this certification is fully device-independent, relying only on observed input-output statistics without requiring any knowledge of the internal workings of the devices.
	
	To estimate the amount of randomness, we must consider a specific quantum system while accounting for a potential adversary (limited by quantum theory) who may control the devices. For unconditional security, we assume the user has no knowledge of the internal workings and can only access the input-output statistics, denoted by $\vec{P}_{obs}$. Since the adversary controls the devices, they may generate $\vec{P}_{obs}$ through convex mixtures of extremal behaviours $\vec{P}_{ext}$ such that $\vec{P}_{obs}=\sum_{ext}q_{ext}\vec{P}_{ext}$, where the adversary knows the weights $q_{ext}$, but the user does not. 
	
	Hence, to securely quantify the amount of randomness $\mathcal{R}(\vec{P}_{obs})$ secure against any quantum side information \cite{Liu2021}, one must evaluate the maximum guessing probability $G(\vec{P}_{obs})=\max_{a_k, x_k} p(\textbf{a}|\textbf{x})$, which is expressed in terms of the min-Entropy to measure the randomness in bits \cite{Konig2008}. Thus, the amount of device-independent randomness pertaining to an observed nonlocal behaviour is given by the following optimisation problem
	\begin{eqnarray} \label{randef}
		&& \mathcal{R}(\vec{P}_{obs}) = -\log_2 \qty[\max_{q_{ext},\vec{P}_{ext}} \sum_{ext} q_{ext}G(\vec{P}_{ext})]  \nonumber \\
		&& \text{Subject to} \nonumber \\
		&& \text{(i)} \ p_{H} = \delta; \ 0<\delta\leq p_{H}^{opt} \nonumber \\
		&& \text{(ii)} p_{AB}(+1,+1 | 1,0,\rho) = p_{BC}(+1,+1 | 1,0,\rho) =0, \nonumber \\
		&& p_{AC}(+1,+1 | 0,1,\rho) =p(-1,-1,-1 |1,1,1,\rho) = 0. \nonumber \\
		&& \text{(iii)} \ \vec{P}_{obs} = \sum_{ext}q_{ext}\vec{P}_{ext} \in \mathcal{Q}
	\end{eqnarray} 
	where $G(\vec{P}_{ext}) = \max_{a_k, x_k} p_{ext}(\textbf{a}|\textbf{x})$, $p_{H}^{opt}=0.0181$ is the quantum optimal value of the Hardy probability and $\mathcal{Q}$ is the set of quantum behaviours. This optimisation is generally difficult, as quantum theory does not form a polytope. However, it can be approximated using SDP based approach known as the NPA hierarchy \cite{Navascues2007}, which provides upper bounds on the adversary’s guessing probability. The implementation of NPA-hierarchy \cite{gitmishra} was done
	in PYTHON using CVXPY (see Appx~\ref{apnB}) and the SDP computed randomness is presented in Fig.~\ref{fig:rand} as a function of Hardy probability. 
	
	Alternatively, if the observed behaviour is extremal and self-testable, the optimisation becomes trivial. By identifying a region in the parameter space $(r,s,t)$ where Hardy's argument enables self-testing, even without maximal violation, these behaviours are extremal and uniquely determined. Thus, the guessing probability can be directly computed. Unlike the SDP-computed randomness, randomness evaluated from the self-tested quantum strategy generates a region of certified randomness. Different combinations of parameters $r$, $s$, and $t$ can yield the same Hardy nonmaximal violation ($p_H$), with each nonzero $p_H$ producing a corresponding region of certified randomness as illustrated in Fig.~\ref{fig:rand}.
	
	To analyse the maximum certifiable randomness via the tripartite Hardy relations, we closely examine the structure of the Hardy behaviour defined in Eq.~(\ref{hardybehavtable}). Ideally, maximal randomness corresponds to a uniform distribution over eight outcomes $(\frac{1}{8})$. However, such a case is unattainable due to the presence of structural zeroes in all but one row of the distribution. 
	
	Consider the first row, corresponding to the measurement setting $(x,y,z)=(0,0,0)$, i.e. the distribution $\{p(a,b,c|0,0,0,\rho)\}$. Here, the first entry attains a maximum probability of $p_{H}=0.0181 < \frac{1}{8}$. A straightforward calculation shows that the remaining probabilities cannot be uniformly distributed as $\frac{1}{7}(1-0.0181)$, ruling out the possibility of achieving a uniform distribution in this row.
	
	The next best candidate for uniformity arises from the last row, corresponding to the setting $(x,y,z)=(1,1,1)$, where the final element $p(-1,-1,-1|1,1,1,\rho)=0$. In this scenario, maximal randomness is attained when the remaining seven outcomes each occur with equal probability $\frac{1}{7}$. This distribution becomes feasible when the parameters are set to $s=t=\frac{4}{7}$ and $r=\frac{1}{2s}\qty(3-\sqrt{\frac{4}{s}-3})=\frac{7}{8}$. 
	
	For the parameter set $(\frac{7}{8},\frac{4}{7},\frac{4}{7})$, the Hardy probability is $p_{H}(\frac{7}{8},\frac{4}{7},\frac{4}{7})=0.0179$, and the corresponding amount of certified randomness is $\mathcal{R}(A_1,B_1,C_1)=\log_2 7 \approx 2.8073$ bits. Crucially, this point lies on an extremal boundary of the convex hull as shown in Fig.~\ref{fig:chulldia}, and it is also a self-testable point—thereby confirming that the quantified randomness is genuinely certified.
	
	\begin{figure}[htbp]
		\centering
		\includegraphics[width=0.48\textwidth]{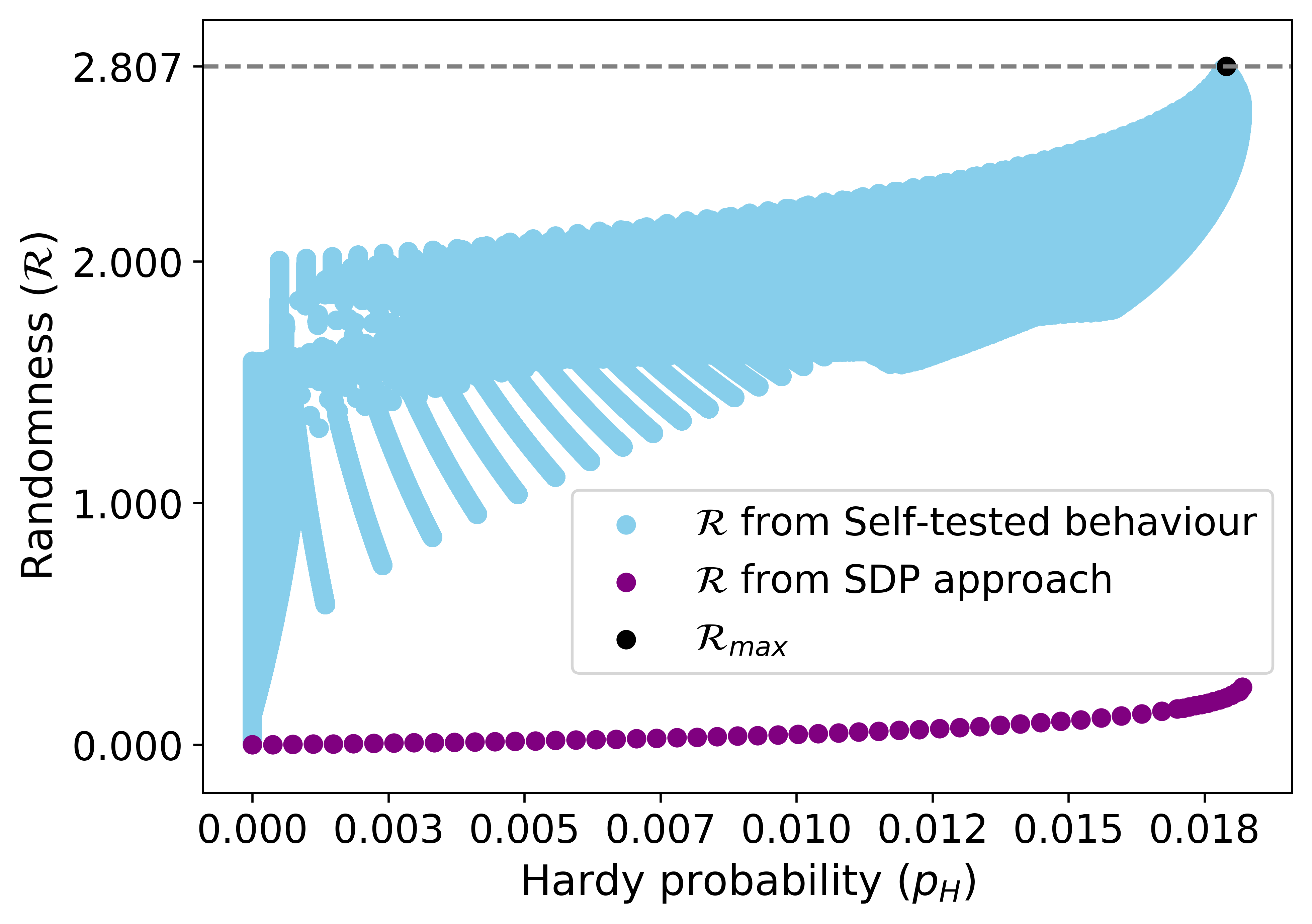}
		\caption{Plot illustrating two types of Hardy-certified randomness: (i) a region of randomness (sky-blue shaded areas) obtained from the self-tested quantum strategies, and (ii) a monotonic curve (purple dotted line) corresponding to the randomness computed via the NPA hierarchy through SDP in Python. Owing to the fact that different combinations of parameters ($r$, $s$, $t$) can lead to the same Hardy violation ($p_H$), each nonzero value of $p_H$ generates a region of certified randomness from the self-tested behaviours. In contrast, the SDP approach yields a unique value of certified randomness for each $p_H$. The SDP computation achieves a maximum randomness of approximately $0.2387$ bits at the optimal Hardy probability $p_H^{\mathrm{opt}} = 0.0181$, whereas the maximum randomness certified from the self-tested behaviour reaches $\log_2 7 \approx 2.8073$ bits when $p_H = 0.0179$.}
		\label{fig:rand}
	\end{figure}


	\section{Summary and Outlook}\label{conclu}
	
	We develop a framework for self-testing a broad class of pure nonmaximally genuinely entangled tripartite states using Hardy relations. We begin by showing that tripartite Hardy-type arguments \cite{Rahaman2014} effectively capture genuine nonlocality. Specifically, we demonstrate that any correlation exhibiting a nonzero Hardy probability, while satisfying the corresponding constraints, cannot be explained by a fully local model, nor by any ontic model in which two parties share even unphysical PR box \cite{Popescu1994} type of correlations.
	
	Analysing these ontic models through the assumptions of SI and OI \cite{Jarrett1984, Shimony1993}, we find that a fully local ontic model \cite{Mermin1990} require the mutual satisfaction of both SI and OI among all parties. In contrast to fully local ontic model, a nosignalling bilocal model \cite{Gallego2012, Bancal2013} satisfies SI globally (similar to satisfying nosignalling principle in operational level) while relaxing OI between any two parties, effectively allowing stronger nosignalling correlations such as PR-box-type bipartite correlations. Although Hardy relations were originally viewed within the framework of fully local ontic models, our analysis reveals that they detect a stronger form of nonlocality than that evidenced by conventional Mermin-type violations \cite{Mermin1990}.
	
	This distinction is crucial because the violation of a fully local model, such as through Mermin inequalities, does not preclude alternative ontic models with stronger bipartite correlations. Thus, Mermin-type violations may arise from either genuinely entangled states or specific ontic models that relax OI between parties.
	
	A key advantage of Hardy-based self-testing over traditional Bell inequality-based approaches lies in its ability to certify quantum behaviour even under nonmaximal violations. Bell-based frameworks typically allow self-testing only at specific points of maximal inequality violation. In contrast, Hardy-type arguments define a family of joint probability distributions satisfying four zero constraints. These conditions significantly reduce the dimensionality of the $26$-dimensional probability space into $13$-dimensional space and enable full characterisation of quantum correlations using just three parameters, which manifest in both the state and measurements.
	
	We implement a concave cover construction using a convex-hull approach in Python to identify regions of these parameters that uniquely realise specific quantum correlations. This permits self-testing of the state (up to local isometry and complex conjugation) even for nonmaximally Hardy-violating behaviours, thereby broadening the set of certifiable quantum strategies and enhancing the flexibility of Hardy-based methods.
	
	In addition, we demonstrate that Hardy-type correlations can be used to certify genuine randomness. Employing a SDP technique implemented in Python (cvxpy), we compute lower bounds on the extractable randomness and then compare the certified randomness across various self-tested points. Unlike Bell functionals, which typically self-test a single extremal point per inequality, Hardy-type arguments identify an entire set of behaviours corresponding to a fixed nonzero Hardy probability—all of which are self-testable. This enables the certification of a region of randomness for a given Hardy violation. We find that the maximal randomness certified using Hardy correlations can reach up to $\log_2 7\approx 2.8073$ bits.
	
	Our findings also open up several promising directions for future research. Generalising the SI/OI-based framework to four or more parties could uncover deeper structural features of multipartite nonlocality. In our analysis of Lemma \ref{Lemma1}, we considered the Hardy probability function to define a sub-region within the state parameter space; exploring alternative linear combinations of probabilities may extend this region further and improve the robustness of the self-testing procedure. Additionally, investigating Hardy-type constructions with fewer than four zero constraints could potentially allow for larger randomness certification, possibly approaching the theoretical tripartite bound of $3$-bits. Our methodology also paves the way for analogous studies in systems with more than three parties.
	
	Finally, it is worth noting that in the bipartite case, Hardy correlations lie on the nosignalling facet and are extremal but non-exposed within the quantum set \cite{Chen2023}. Extending this geometric analysis to the tripartite setting could offer richer insight into the boundary structure of the quantum set beyond the well-studied simplest $2-2-2$ scenario.
	
	In conclusion, while various logical nonlocality arguments exist for multipartite systems, the question of self-testing in each case remains both fundamental and practically significant, particularly for applications in quantum information processing where device-independent verifiability plays a central role.
	
	
	\section*{Acknowledgments}
	We are grateful for stimulating discussions with Ashutosh Rai, Subhendu Bikash Ghosh, and Snehasish Roy Chowdhury. R.A. acknowledges financial support from the Council of Scientific and Industrial Research (CSIR), Government of India. S.S. acknowledges support from the National Natural Science Foundation of China (Grant No. G0512250610191).

	
	
	%

	
	\appendix

	\onecolumngrid

	\section{Derivation of Hardy relation from the assumptions of SI and OI in ontic level} \label{apnA}
	
	Consider the joint probability distribution $p(a,b,c|x,y,z,\lambda)$ in the ontic model satisfying $SI(3,3)\wedge OI(3,3)$ given by Eqs.~(\ref{si33}) and (\ref{oi33}).
	\begin{equation}
		\begin{aligned}
			p(a,b,c|x,y,z,\lambda) &= p(a|x,y,z,\lambda)p(b,c|a,x,y,z,\lambda) \ \ \qty[\text{Definition of joint probability} ]\\
			&= p(a|x,y,z,\lambda)p(b|a,x,y,z,\lambda)p(c|a,b,x,y,z,\lambda) \ \ \qty[\text{Definition of joint probability} ]\\
			&=p(a|x,\lambda)p(b|a,y,\lambda)p(c|a,b,z,\lambda) \ \ \qty[SI(3,3)] \\
			&=p(a|x,\lambda)p(b|y,\lambda)p(c|z,\lambda) \ \ \qty[OI(3,3): OI(A\rightleftharpoons B,A\rightleftharpoons C,B\rightleftharpoons C)]
		\end{aligned}
	\end{equation}
	Therefore, under the assumptions $SI(3,3)\wedge OI(3,3)\implies$ the ontic model leads to \textit{full-factorisability}. 
	
	We now demonstrate that, under the constraints from Eq.~(\ref{tri}), the above assumptions imply $p_H=p(+1,+1,+1|0,0,0,\lambda)=p_A(+1|0,\lambda)p_B(+1|0,\lambda)p_C(+1|0,\lambda)=0$. The set of Hardy-constraints (employing the factorisability condition), are
	\begin{eqnarray}
		p_{AB}(+1,+1 | 1,0,\lambda) &=& p_{A}(+1| 1,\lambda)p_{B}(+1 |0,\lambda)= 0 \label{c1}\\
		p_{BC}(+1,+1 | 1,0,\lambda) &=& p_{B}(+1| 1,\lambda)p_{C}(+1 |0,\lambda)= 0 \label{c2}\\
		p_{AC}(+1,+1 | 0,1,\lambda) &=& p_{A}(+1| 0,\lambda)p_{C}(+1 |1,\lambda)= 0 \label{c3}\\
		p(-1,-1,-1 |1,1,1,\lambda) &=& p_{A}(-1| 1,\lambda)p_{B}(-1 |1,\lambda)p_{C}(-1|1,\lambda)= 0 \label{c4}
	\end{eqnarray} 
	From Eqs.~(\ref{c1}-\ref{c3}), we derive
	\begin{eqnarray}
		\text{(\ref{c1})} &\implies& \qty[1-p_{A}(-1| 1,\lambda)]p_{B}(+1 |0,\lambda)=0 \implies p_{B}(+1 |0,\lambda)=p_{B}(+1 |0,\lambda)p_{A}(-1| 1,\lambda) \label{c5} \\
		\text{(\ref{c2})} &\implies& \qty[1-p_{B}(-1| 1,\lambda)]p_{C}(+1 |0,\lambda)=0 \implies p_{C}(+1 |0,\lambda)=p_{C}(+1 |0,\lambda)p_{B}(-1 |1,\lambda) \label{c6}\\
		\text{(\ref{c3})} &\implies& \qty[1-p_{c}(-1| 1,\lambda)]p_{A}(+1 |0,\lambda)=0 \implies p_{A}(+1 |0,\lambda)=p_{A}(+1 |0,\lambda)p_{C}(-1| 1,\lambda) \label{c7}
	\end{eqnarray}
	Multiplying Eqs.~(\ref{c5}), (\ref{c6}) and (\ref{c7}) yields
	\begin{eqnarray}
		&&p_B(+1|0,\lambda)p_C(+1|0,\lambda)p_A(+1|0,\lambda)=p_{B}(+1 |0,\lambda)p_{A}(-1| 1,\lambda)p_{C}(+1 |0,\lambda)p_{B}(-1 |1,\lambda)p_{A}(+1 |0,\lambda)p_{C}(-1| 1,\lambda) \nonumber \\
		&\implies& p_H=\Big[ p_{A}(-1| 1,\lambda)p_{B}(-1| 1,\lambda)p_{C}(-1| 1,\lambda) \Big] p_B(+1|0,\lambda)p_C(+1|0,\lambda)p_A(+1|0,\lambda) =0 \ \ \qty[\text{From Eq.~(\ref{c4})}]
	\end{eqnarray}
	Therefore, any fully local ontic model satisfying $SI(3,3)\wedge OI (3,3)$ must yield $p_H=0$. While this result has been previously established \cite{Rahaman2014}, we highlight here that Hardy’s correlations reveal a stronger form of nonlocality than Mermin-type correlations-an aspect that has not been fully appreciated.
	
	To illustrate this further, consider a scenario in which outcome independence between Bob and Charlie is relaxed. That is, we adopt the assumption $SI(3,3)\wedge OI(2,2)$. In quantum theory, such a scenario could correspond to Bob and Charlie sharing a maximally entangled state, implying that their outcomes are pre-correlated. This is particularly relevant because there exist correlations that cannot be explained by fully local models but can be simulated by ontic models where some parties share entangled states. If we do not confine ourselves to quantum theory but allow any nosignalling correlations, Bob and Charlie could share PR-box correlations \cite{Popescu1994}. 
	
	In standard literature, such ontic models are often termed as NSBL models \cite{Bancal2013}. In our framework, this corresponds to an ontic model with assumption $SI(3,3)\wedge OI(2,2)$. In such a ontic model, considering the joint probability distribution $p(a,b,c|x,y,z,\lambda)$
	\begin{equation}
		\begin{aligned}
			p(a,b,c|x,y,z,\lambda) &= p(a|x,y,z,\lambda)p(b,c|a,x,y,z,\lambda) \ \ \qty[\text{Definition of joint probability} ]\\
			&= p(a|x,y,z,\lambda)p(b|a,x,y,z,\lambda)p(c|a,b,x,y,z,\lambda) \ \ \qty[\text{Definition of joint probability} ]\\
			&=p(a|x,\lambda)p(b|a,y,\lambda)p(c|a,b,z,\lambda) \ \ \qty[SI(3,3)] \\
			&=p(a|x,\lambda)p(b|y,\lambda)p(c|b,z,\lambda) \ \ \qty[OI(2,2): OI(A\rightleftharpoons B,A\rightleftharpoons C)] \\
			&=p(a|x,\lambda)p(b,c|y,z,\lambda) \ \ \qty[\text{using} \ p(c|b,z,\lambda)=\frac{p(b,c|y,z,\lambda)}{p(b|y,\lambda)}]
		\end{aligned}
	\end{equation}
	The set of Hardy-constraints (applying this new bi-factorisability condition) are now becomes
	\begin{eqnarray}
		p_{AB}(+1,+1 | 1,0,\lambda) &=& p_{A}(+1| 1,\lambda)p_{B}(+1 |0,\lambda)= 0 \label{cc1} \\
		p_{BC}(+1,+1 | 1,0,\lambda) &=& p_{B}(+1| 1,\lambda)p_{C}(+1 |+1_b,0,\lambda)= 0 \label{cc2} \\
		p_{AC}(+1,+1 | 0,1,\lambda) &=& p_{A}(+1| 0,\lambda)p_{C}(+1 |1,\lambda)= 0 \label{cc3} \\
		p(-1,-1,-1 |1,1,1,\lambda) &=& p_{A}(-1| 1,\lambda)p_{B}(-1 |1,\lambda)p_{C}(-1|-1_b,1,\lambda)= 0 \label{cc4}
	\end{eqnarray} 
	From Eqs.~(\ref{cc1}-\ref{cc3}), similar derivations lead to
	\begin{eqnarray}
		\text{(\ref{c1})} &\implies& \qty[1-p_{A}(-1| 1,\lambda)]p_{B}(+1 |0,\lambda)=0 \implies p_{B}(+1 |0,\lambda)=p_{B}(+1 |0,\lambda)p_{A}(-1| 1,\lambda) \label{cc5} \\
		\text{(\ref{c2})} &\implies& \qty[1-p_{B}(-1| 1,\lambda)]p_{C}(+1 |+1_b,0,\lambda)=0 \implies p_{C}(+1 |+1_b,0,\lambda)=p_{B}(-1 |1,\lambda)p_{C}(+1 |+1_b,0,\lambda) \label{cc6}\\
		\text{(\ref{c3})} &\implies& \qty[1-p_{c}(-1| 1,\lambda)]p_{A}(+1 |0,\lambda)=0 \implies p_{A}(+1 |0,\lambda)=p_{A}(+1 |0,\lambda)p_{C}(-1| 1,\lambda) \label{cc7}
	\end{eqnarray}
	Multiplying Eqs.~(\ref{cc5}), (\ref{cc6}) and (\ref{cc7}) leads to
	\begin{eqnarray}
		&&p_B(+1|0,\lambda)p_C(+1|+1_b,0,\lambda)p_A(+1|0,\lambda)=p_{B}(+1 |0,\lambda)p_{A}(-1| 1,\lambda)p_{B}(-1 |1,\lambda)p_{C}(+1 |+1_b,0,\lambda)p_{A}(+1| 0,\lambda)p_{C}(-1| 1,\lambda) \nonumber \\
		&&\implies p_H=\Big[ p_{A}(-1| 1,\lambda)p_{B}(-1| 1,\lambda)p_{C}(-1| 1,\lambda) \Big] p_B(+1|0,\lambda)p_C(+1|+1_b,0,\lambda)p_A(+1|0,\lambda) \nonumber \\
		&&\implies p_H\Big[1-p_{A}(-1| 1,\lambda)p_{B}(-1| 1,\lambda)p_{C}(-1| 1,\lambda)\Big]=0 \nonumber \\
		&&\implies p_H=0, \ \ \text{Or} \ \ p_{A}(-1| 1,\lambda)p_{B}(-1| 1,\lambda)p_{C}(-1| 1,\lambda)=1 \nonumber \\
		&&\implies p_H=0, \ \ \text{Or} \ \ p_{A}(-1| 1,\lambda)=p_{B}(-1| 1,\lambda)=p_{C}(-1| 1,\lambda)=1
	\end{eqnarray}
	It is crucial to observe the implications of the condition $p_{C}(-1| 1,\lambda)=1$. This means that Charlie's outcome for the input setting $z=1$ is fully deterministic, regardless of any other variables. In other words, once we know $z=1$, the value of $c=-1$ becomes fixed, even if we condition on additional hidden variables or on Bob's output. Consequently, we can write $p_{C}(-1| b,1,\lambda)=1$ without loss of generality. 
	
	Thus, along with $p_{A}(-1| 1,\lambda)=p_{B}(-1| 1,\lambda)=1$ imply that $p_{A}(-1| 1,\lambda)p_{B}(-1| 1,\lambda)p_{C}(-1| b,1,\lambda)=1$ directly contradicts the constraint given by Eq.~(\ref{cc4}). Therefore, the only consistent solution is $p_H=0$. This implies that the Hardy probability must vanish even when we \textit{relax} the assumption of outcome independence between Bob and Charlie, i.e., no ontic model satisfying $SI(3,3)\wedge OI(2,2)$ can reproduce a nonzero Hardy probability.
	
	Hence, Hardy’s test of nonlocality detects a form of nonlocal correlation that is strictly stronger than that revealed by Mermin-type inequalities, which only require the violation of full locality. The Hardy argument goes beyond this by ruling out even a broader class of ontic models, including those that allow some outcome dependence (NSBL models). In this sense, under the assumption of full settings independence $SI(3,3)$, Hardy-type violating correlations qualify as genuinely nonlocal.

	
	\section{Computation of Hardy-certified randomness in quantum theory using NPA hierarchy} \label{apnB}
	
	The amount of Hardy-certified randomness in quantum theory for a given nonzero value of $p_{Hardy}=\delta$, denoted by $\mathcal{R}(\delta)$, is determined by solving the following optimisation problem
	\begin{equation} \label{qopt}
		\begin{aligned}
			&\mathcal{R}(\delta) =-\log_{2}\qty[\qty(\max\limits_{\{\rho,A_{a|x},B_{b|y},C_{c|z}\}}~\Tr[\rho A_{a|x} \otimes B_{b|y} \otimes C_{c|z}])]  \\
			& \text{Subject to} \\
			& \text{(i)} \ p_{H} = \delta; \ 0<\delta\leq p_{H}^{opt}  \\
			& \text{(ii)} p_{AB}(+1,+1 | 1,0,\rho) = p_{BC}(+1,+1 | 1,0,\rho) = p_{AC}(+1,+1 | 0,1,\rho) =p(-1,-1,-1 |1,1,1,\rho) = 0.
		\end{aligned}
	\end{equation}
	It is important to note that the above optimisation is defined over both the quantum state and the measurement operators, making it a non-convex problem—one that is, in general, computationally challenging. Nevertheless, this non-convex problem can be transformed into a convex optimisation problem (SDP) through an appropriate relaxation. To achieve this, we consider optimising over a convex set $S$ which strictly contains the quantum set $\mathcal{Q}$, thereby enabling a computable formulation. In particular, we employ a relaxation method known as the NPA hierarchy \cite{Navascues2007,Navascues2008}, which provides an infinite sequence of outer approximations of the quantum set $
	\mathcal{Q}_1 \supset \mathcal{Q}_2 \supset \dots \supset \mathcal{Q}_n \supset \dots$. Each level in this hierarchy is defined via SDP. It has been proven that these sets converge to the quantum set in the limit $ n \to \infty $, i.e., $\lim_{n\to\infty}\mathcal{Q}_n = \mathcal{Q}$ \cite{Navascues2007,Navascues2008}. Now, for our purposes, we choose $ S = \mathcal{Q}_3 $. Consequently, the optimisation problem in Eq.~(\ref{qopt}) becomes
	\begin{equation}
		\begin{aligned}
			&\mathcal{R}(\delta) =-\log_{2}\qty[\max\limits_{\vec{P_{\delta}}\in \mathcal{Q}_3}\qty(\max\limits_{a,b,c,x,y,z}~p(a,b,c|x,y,z))] \\
			& \text{Subject to} \\
			& \text{(i)} \ p_{H} = \delta; \ 0<\delta\leq p_{H}^{opt}  \\
			& \text{(ii)} p_{AB}(+1,+1 | 1,0,\rho) = p_{BC}(+1,+1 | 1,0,\rho) = p_{AC}(+1,+1 | 0,1,\rho) =p(-1,-1,-1 |1,1,1,\rho) = 0.
		\end{aligned}
	\end{equation}
	The above optimisation problem has been implemented in PYTHON using CVXPY, by setting the hierarchy level as $n=3$ \cite{gitmishra}. It is important to note that, by construction, the certified randomness over the full quantum set $\mathcal{Q}$ satisfies $
	\mathcal{R}^{\mathcal{Q}}(\text{Hardy}) \ge \mathcal{R}^{\mathcal{Q}_3}(\text{Hardy})$.

\end{document}